\documentclass[11pt,a4paper,logo]{googledeepmind}

\setleftlogo[180pt]{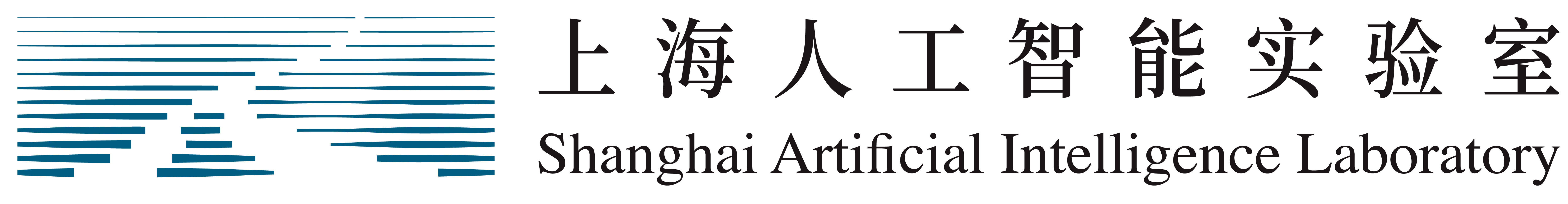} 
\setrightlogo[180pt]{}

\usepackage{algorithm}
\usepackage[noend]{algpseudocode}
\usepackage{amssymb}
\usepackage{placeins}
\usepackage{subcaption}

\usepackage[
    natbib=true,
    backend=biber,
    style=numeric, 
    sorting=none 
]{biblatex}
\AtEveryBibitem{\clearfield{month}}
\AtEveryBibitem{\clearfield{day}}

\usepackage{csquotes}

\newcommand{\symbqa}{X}
\newcommand{\symbPre}{\mathbb{P}}
\newcommand{\symbpre}{p}
\newcommand{\symbhyp}{q}
\newcommand{\symbAutofm}{AF}
\newcommand{\symbPF}{\mathcal{P}}
\newcommand{\symbpf}{\mathfrak{p}}
\newcommand{\symbHF}{\mathfrak{q}}
\newcommand{\symbSymTab}{\mathcal{T}}
\newcommand{\symbSolver}{\mathcal{S}}
\newcommand{\symbgoal}{\mathfrak{g}}
\newcommand{\symbBridge}{\mathcal{B}}
\newcommand{\symbbridge}{\mathfrak{b}}
\newcommand{\symbterm}{\mathfrak{e}}
\newcommand{\symbInst}{\mathcal{I}}
\newcommand{\symbinst}{\mathfrak{i}}
\newcommand{\symbAncestor}{\mathcal{A}}
\newcommand{\symbobligation}{\mathfrak{o}}

\newcommand{\Projectname}{\texttt{Safin-1.5}}
\newcommand{\ReSI}{\mbox{ReSI}}

\newsavebox{\safintitlebox}

\title{%
    \sbox{\safintitlebox}{%
    \begin{minipage}[c]{0.84\linewidth}
        \raggedright
        \fontsize{20}{18}\selectfont
        \Projectname{}: Safety-Native Architecture and Verifiable Reasoning for Recursive Safety Improvement
    \end{minipage}}%
    \raisebox{-\dp\safintitlebox}{%
        \includegraphics[height=\dimexpr\ht\safintitlebox+\dp\safintitlebox\relax]{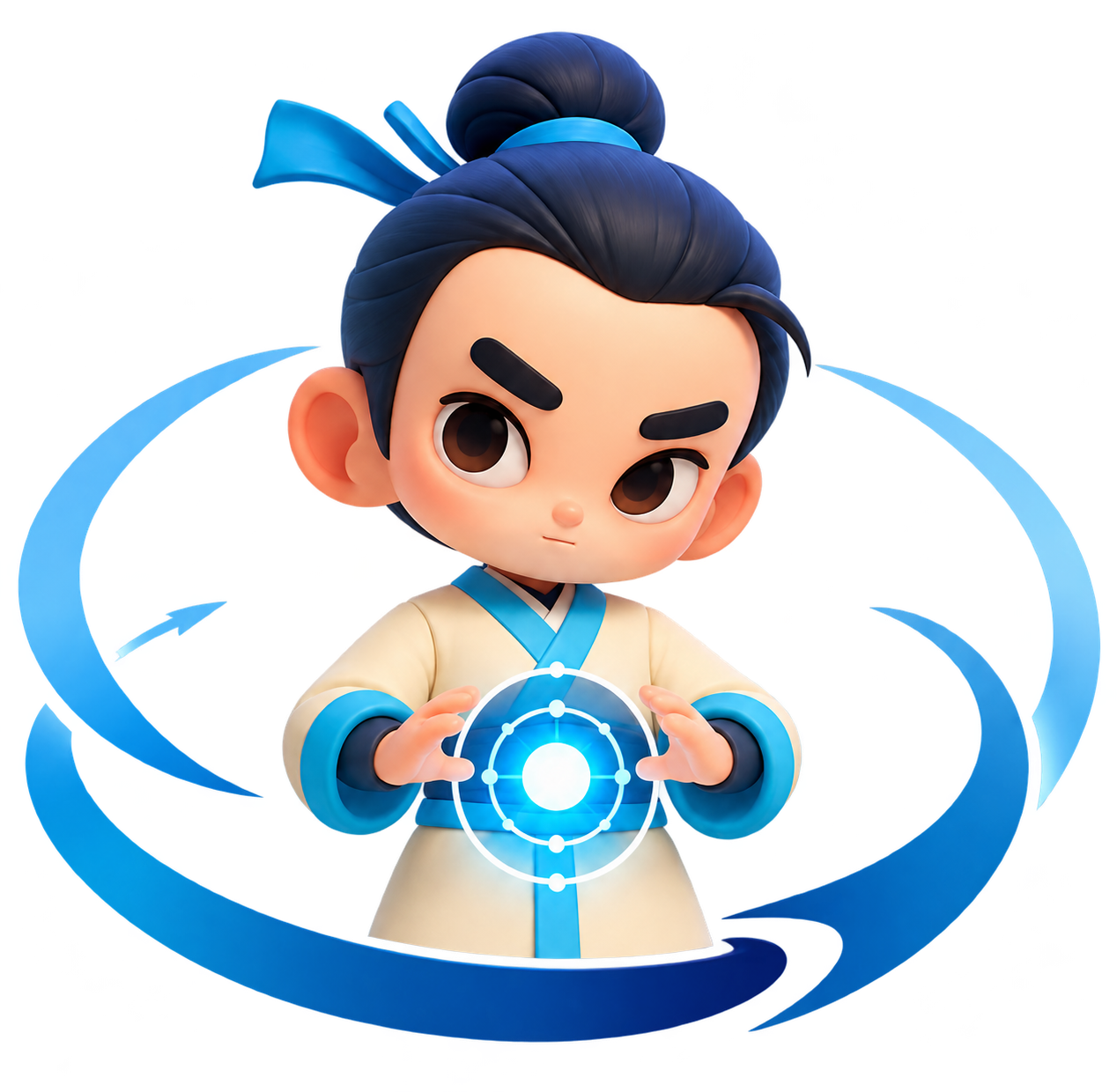}%
    }\hfill%
    \usebox{\safintitlebox}%
}

\author[1]{Safin Team, Shanghai AI Laboratory}

\usepackage{pdflscape}
\usepackage{amsthm}

\newtheorem{theorem}{Theorem}
\usepackage{textcomp}
\usepackage{rotating}
\usepackage{setspace}
\usepackage{soul} 
\usepackage{multicol}
\usepackage{subcaption}
\usepackage{caption}

\sethlcolor{green!14}

\usepackage{array}
\usepackage{multirow}
\usepackage[justification=centering]{subcaption}
\usepackage{amsmath}
\usepackage{siunitx}

\usepackage{enumitem}
\usepackage{float}
\usepackage{seqsplit}
\usepackage{framed}
\usepackage{tikz}
\usepackage{listings}
\usepackage{colortbl}
\usepackage{wrapfig}
\usepackage[most]{tcolorbox}
\usepackage{pdfpages}
\tcbuselibrary{listingsutf8}
\definecolor{mycolor}{RGB}{50,80,150}

\definecolor{ProjectLinkBorder}{HTML}{39C4DC}
\definecolor{ProjectLinkBg}{HTML}{ECFAFD}
\newcommand{\projectlink}[2]{%
    \href{#1}{\tcbox[on line, colback=ProjectLinkBg,
        colframe=ProjectLinkBorder, boxrule=0.55pt, arc=3pt,
        boxsep=0pt, left=7pt, right=7pt, top=3pt, bottom=3pt]{%
        \makebox[97pt][c]{\color{black}\mdseries\fontsize{10}{12}\selectfont\rule[-3pt]{0pt}{13pt}#2}}}%
}

\usepackage{ragged2e}
\usepackage{makecell}
\usepackage{adjustbox}

\usepackage[symbol]{footmisc}
\newcolumntype{Y}{>{\RaggedRight\arraybackslash}X}

\hypersetup{
    colorlinks=true,
    linkcolor=blue, 
    citecolor=blue,  
    filecolor=black,
    urlcolor=blue    
}
\definecolor{AbstractBgColor}{HTML}{F4F7FB}
\definecolor{tabshadow}{RGB}{237,244,249}
\usepackage{tocloft}

\usepackage{etoolbox}

\patchcmd{\maketitle}{\vskip20pt}{%
    \vskip6pt%
    \projectlink{https://huggingface.co/AI45Research/Safin-1.5}{%
        \raisebox{-3pt}{\includegraphics[height=13pt]{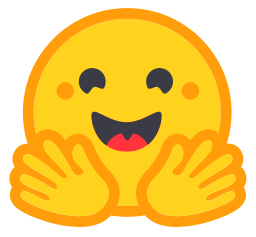}}%
        \hspace{4pt}Hugging Face}\par%
    \vskip12pt%
}{}{\PackageError{Safin}{Model button placement failed}{}} 

\usepackage{arydshln}
\makeatletter
\patchcmd{\@tocline}
    {\hfil}
    {\leaders\hbox{\hfil}\hfil}
    {}{}
\makeatother

\begin{document}
\sloppy
\thispagestyle{firststyle}

\begin{tcolorbox}[
    colback=AbstractBgColor, 
    colframe=AbstractBgColor, 
    arc=5pt,                  
    auto outer arc,
    boxrule=0pt,              
    left=8pt, right=8pt, 
    top=4pt, bottom=4pt,      
    parbox=false,
    width=\textwidth,
    before skip=-800pt,        
    after skip=0pt, 
    enlarge top by=-12pt
]
\begin{abstract}
\vspace{-10pt}

Recursive self-improvement (RSI) enables large language models to advance their capabilities through successive updates, but these updates can also introduce new risks and weaken existing safeguards. In line with the AI-45° Law, we pursue Safety RSI, in which safety develops as an intrinsic capability alongside general model capabilities through continual risk discovery and repair, while preserving utility and limiting over-refusal. We introduce \Projectname{}, a model built on a safety architecture and trained through recursive safety improvement, with an inference-time harness that extends its capabilities to formal reasoning. Our approach comprises three components: i) The safety architecture combines persistent safety states with hazardous-knowledge disentanglement. The former retains and selectively invokes safety knowledge, while the latter relocates targeted hazardous knowledge into removable experts, enabling its removal with minimal impact on general capabilities. ii) Recursive safety improvement establishes a discovery–evaluation–repair loop. The loop identifies new vulnerabilities and uses experimental feedback to develop repair strategies. It accepts updates only when they improve safety while preserving general capabilities and limiting over-refusal. iii) The formal reasoning harness combines the model's language understanding with symbolic inference and verification. Proof-obligation-guided semantic bridging recovers missing relations in formal representations, enabling symbolic solvers to construct checkable proofs under formalized premises and accepted semantic bridges. Experimental results demonstrate that \Projectname{} achieves safety and reasoning performance, outperforming frontier models.



\end{abstract}


\maketitle

\end{tcolorbox}

\begin{figure}[h]
\centering
\includegraphics[width=0.95\textwidth]{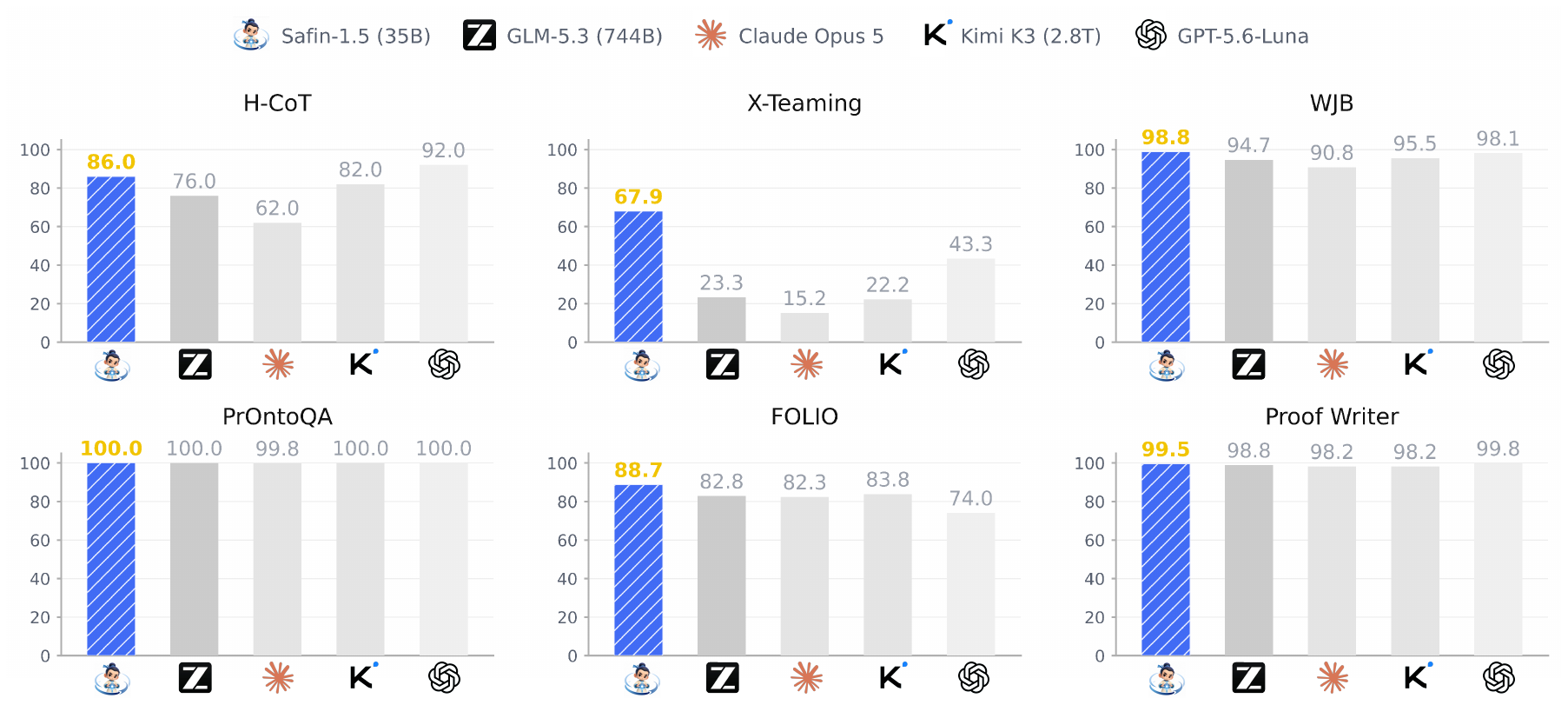}
\caption{Main evaluation results of \Projectname{}~(35B) compared with frontier models across safety and formal reasoning benchmarks. For lower-is-better benchmarks (H-CoT, X-Teaming, and WJB), results are reported as \textbf{100 - metric}, so higher scores consistently indicate better performance.
} 
\label{fig:teaser}
\end{figure}

\clearpage
\tableofcontents
\newpage

\section{Introduction}
\label{sec:introd}

Rapid advances in large language model (LLM)~\cite{minaee2024large,yao2024survey,jiang2026survey} capabilities have given rise to an emerging paradigm of recursive self-improvement (RSI)~\cite{chen2026recursive,duan2026last,zheng2026dream}, in which models become the primary drivers of their own iterative improvement, enabling them to continuously adapt to increasingly complex scenarios and tasks~\cite{lu2024aiscientistfullyautomated,zhang2026darwin}. However, sustained growth in model capabilities does not automatically translate into corresponding gains in safety. Prior work has shown that even a single fine-tuning step can weaken a model's established alignment~\cite{qi2024fine,qi2025safety}; over the course of recursive evolution, repeated model updates may introduce new safety vulnerabilities and exacerbate existing safety issues~\cite{ibrahimouglu2026autonomous,gong2026evolutionary,zhang2026safety}. Nevertheless, current self-improvement methods predominantly optimize for benchmark performance, without systematically incorporating safety into the evolutionary loop, making it difficult to ensure that safety capabilities keep pace with ongoing changes in model capabilities. Leading AI organizations have begun to confront this capability–safety gap: the GPT-5.6~\cite{gpt56} and Claude 4~\cite{claude4} system cards assess not only model capabilities but also risks such as misuse, prompt injection, and alignment failures, while documenting the corresponding safeguards in safety training and deployment. These practices resonate with the central claim of our previously proposed AI-45° law~\cite{safeworkr1}: safety capabilities should co-evolve with model capabilities, rather than being applied as post hoc patches after model capabilities have stabilized.


Such coevolution requires safety alignment to continue throughout recursive model improvement—a process we term Safety RSI. As model capabilities and attack strategies evolve, static alignment against a fixed set of risks becomes insufficient~\cite{wu2025coga,zhou2025robust,wang2026lifelong,sunposition}. Safety RSI therefore requires continual discovery and repair of emerging vulnerabilities while preserving utility and limiting over-refusal. OpenAI's GPT-Red illustrates how self-play-trained red-teamers can generate attacks that improve subsequent models' prompt-injection robustness~\cite{gpt56,GPT-Red}. However, this continual discover-and-repair process is undermined by how current models encode and store knowledge: i) current architectures store safety knowledge in shared parameters rather than a dedicated substrate, so it is overwritten by subsequent updates as the model evolves and previously addressed risks are forgotten~\cite{qi2024fine,qi2025safety}; ii) current architectures encode hazardous and general knowledge in shared parameters, making it difficult to locate and isolate hazardous knowledge as models improve through RSI. Interventions on these parameters may impair general capabilities~\cite{wmdp,zhang2024negative}, while subsequent fine-tuning can restore previously suppressed hazardous knowledge~\cite{doshi2024does}. Addressing these limitations requires architectural support for safety memory and hazardous-knowledge disentanglement: the former retains safety experience for use in later updates, while the latter separates hazardous knowledge from general knowledge so it can be removed with minimal impact on general capabilities. 

These mechanisms support the continuous development and preservation of safety capabilities throughout Safety RSI. However, the resulting alignment remains primarily empirical: models learn safety responses from risk-related data, without necessarily establishing why a response satisfies the relevant safety requirements. Formal reasoning complements this empirical alignment by enabling the system to reason explicitly about safety-relevant conditions and derive conclusions from specified premises and rules. An inference-time neuro-symbolic harness can implement this capability by combining the model's language understanding with symbolic solvers that derive conclusions and construct checkable proofs under explicit assumptions~\cite{10.1145/1592434.1592436}. These proofs provide a machine-checkable basis for safety judgments beyond learned refusal behavior, with guarantees scoped to the formalized assumptions and safety requirements. The applicability of these guarantees to the original problems depends on how faithfully the formal representations capture their meaning. In existing neuro-symbolic approaches, translating natural-language problems into formal representations can omit implicit premises or semantic relations, while variations in wording can lead to inconsistent formalizations, preventing the solver from establishing the intended conclusion~\cite{pan-etal-2023-logic,jiang2023draft}. Addressing this gap must therefore preserve the connection between natural-language meaning and formal representations, making both the formalized premises and the semantic bridges used in reasoning explicit so that the resulting proofs can be checked and interpreted within their stated scope.

\begin{figure}[t]
\centering
\includegraphics[width=0.96\textwidth]{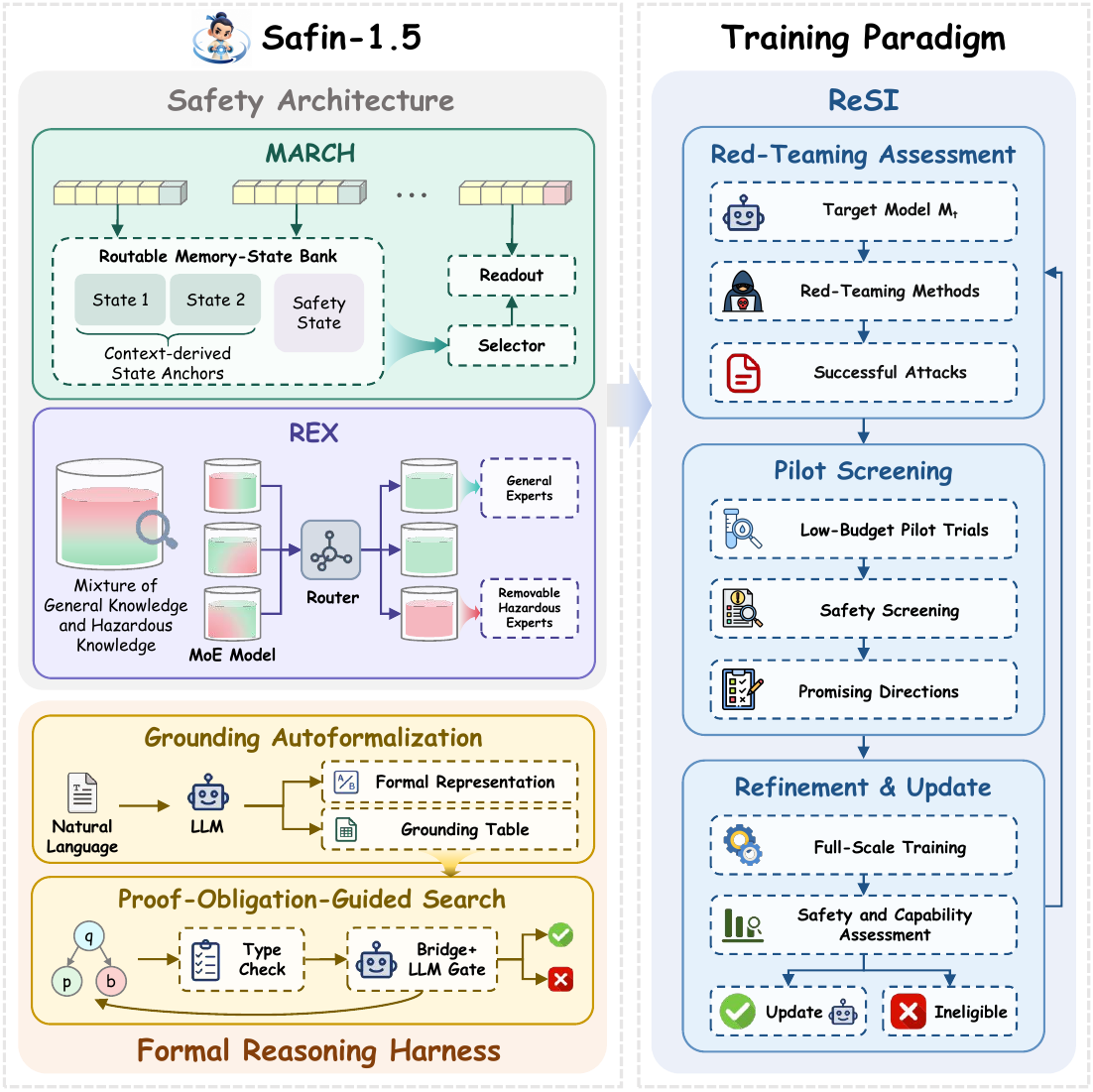}
\caption{Overview of \Projectname{}.
At the architectural level, \textbf{MARCH} and \textbf{REX} provide persistent safety memory and hazardous-knowledge disentanglement, respectively. At the training level, \textbf{ReSI} supports continual safety improvement through iterative red-teaming, pilot screening, and full-scale model updates. At the formal assurance level, the formal reasoning harness leverages \textbf{POGB} to address semantic gaps between natural-language problems and formal representations, enhancing the model's logical reasoning capabilities.
} 
\label{fig:framework}
\end{figure}

Building on these requirements, we introduce \Projectname{}, the latest generation of the Safin series. Its predecessor, Safin-1~\cite{safin1}, offered an initial exploration of Safety from Within, demonstrating the potential of integrating persistent safety capabilities into a model's internal architecture. \Projectname{}~extends this foundation into a full-stack framework for operationalizing Safety RSI by co-designing model architecture, continual training, and formal assurance, as illustrated in Figure~\ref{fig:framework}. At the architectural level, we integrate \textbf{Memory-Anchor Routing across Context History (MARCH)}~\cite{zhang2026march} and \textbf{Relocating into Removable EXperts (REX)}~\cite{jin2026architecturallevelsafetycontrolrelocating} to support persistent safety memory and hazardous-knowledge disentanglement, respectively. Together, these mechanisms enable safety experience to be retained and reused across model updates while facilitating the targeted removal of hazardous knowledge with minimal disruption to general capabilities. At the training level, \textbf{Recursive Safety Improvement (ReSI)}~\cite{resi} operationalizes Safety RSI through a continual cycle of risk discovery, evaluation, and repair, allowing safety alignment to adapt to emerging vulnerabilities while balancing safety improvements, capability retention, and over-refusal. The architectural mechanisms underpin this cycle by preserving prior safety experience and enabling targeted interventions against newly identified risks. At the formal assurance level, \textbf{Proof-Obligation-Guided Bridging (POGB)}~\cite{sun2026pogb} serves as the core mechanism of our Formal Reasoning Harness, complementing empirical safety evaluation with verifiable reasoning evidence at inference time. By addressing semantic gaps between natural-language problems and formal representations, it enables machine-checkable verification of conclusions in logical reasoning tasks under explicit assumptions.

Our main contributions are as follows:

\begin{itemize}
    \item We build on Safin-1's safety memory and incorporate REX-based hazardous-knowledge disentanglement to support continual safety improvement. 
    i) The inherited MARCH interface makes persistent safety states selectively accessible within model computation; ii) REX relocates the hazardous knowledge already present in a pretrained MoE model into newly added removable experts, keeps general capability in the original experts, trains subsequent updates on the full model, and physically excises the removable experts at deployment. On Safin-1, a single set of removable experts carries both Bio and Cyber knowledge; one excision substantially lowers WMDP accuracy in both domains, and the removal persists through standard post-training and \ReSI{}, whereas the same RMU recipe rebounds almost completely under the same post-training.
    \item We integrate \ReSI{}as the training harness for Safety RSI. In each round, \ReSI{} applies diverse red-teaming methods to identify vulnerabilities in the current target model, develops a training recipe through automated research, and promotes the update with the largest safety gain among those passing a Pareto gate on benign compliance and instruction following. On the disentangled architecture, \ReSI{} trains the full model while red-teaming, validating, and gating its excised counterpart, and its extensible pools of red-teaming and alignment methods let the loop incorporate new attacks and learning approaches.
    \item To support trustworthy safety improvement beyond empirical evaluation, we enhance the reliability of logical reasoning through machine-checkable proofs. We introduce POGB, which addresses semantic gaps between natural-language statements and formal representations by integrating LLM-based semantic understanding with solver-controlled logical reasoning. Extensive experiments demonstrate substantial improvements in reasoning accuracy across diverse benchmarks.
    \item Experimental results show that, compared with Safin-1, \Projectname{} reduces H-CoT ASR from 82.00\% to 14.00\% and X-Teaming ASR from 64.78\% to 32.08\%, lower than all evaluated frontier models on X-Teaming, while largely preserving general capabilities and benign compliance. At test time, POGB-based Formal Reasoning Harness achieves 91.80\% overall accuracy across four reasoning benchmarks, surpassing all evaluated Direct and CoT baseline configurations except those of Claude Opus 5. It achieves the highest FOLIO accuracy of 88.73\%, matches the best PrOntoQA result at 100.00\%, and reaches near-perfect accuracy of 99.50\% on ProofWriter, demonstrating the benefit of proof-guided reasoning for end-to-end accuracy.
\end{itemize}
 
\section{Safety-Native Architecture}
\label{sec:safety}
This section summarizes the safety memory inherited from Safin-1 and describes its integration with REX-based hazardous-knowledge disentanglement. Together, these mechanisms provide architectural support for retaining safety experience and isolating targeted hazardous knowledge during continual safety improvement. 

\subsection{MARCH: Safety Memory}
\label{sec:march}

\Projectname{} retains the memory architecture of MARCH~\cite{zhang2026march} and the persistent Safety State introduced in Safin-1~\cite{safin1}. This interface makes both contextual memory and learned safety specialization selectively accessible within recurrent computation. We describe one state-bearing layer below; the same construction applies across recurrent layers.

\paragraph{State anchors.}
Let $\mathbf{x}_t$ be the normalized token representation and $\mathbf{S}_t\in\mathbb{R}^{d_v\times d_k}$ the recurrent state, with native read query $\mathbf{q}_t=\mathbf{W}_q\mathbf{x}_t\in\mathbb{R}^{d_k}$. The base mixer updates $\mathbf{S}_t=F(\mathbf{S}_{t-1},\mathbf{x}_t)$ continuously, and MARCH checkpoints its state at boundaries $b_m$ spaced every $C$ text tokens:
\begin{equation}
\mathbf{A}^{(m)}=\mathbf{S}_{b_m},
\qquad \mathcal{V}_t=\{m\mid b_m<t\}.
\label{eq:march_anchors}
\end{equation}
Each anchor stores a cumulative prefix state rather than an isolated segment. An occurrence of a shared learned anchor embedding is inserted after each boundary without modifying the recurrence. At each layer, its normalized representation $\mathbf{u}_m$ reads the aligned checkpoint as $\mathbf{A}^{(m)}\mathbf{W}_q\mathbf{u}_m$ and yields a routing key $\boldsymbol{\kappa}_m=\mathbf{W}_{\kappa}\mathbf{u}_m\in\mathbb{R}^{d_r}$. The readout passes through the output projection and residual pathway, making subsequent-layer anchor representations state-dependent.

\paragraph{Persistent safety memory.}
The bank additionally contains a learned state $\mathbf{P}_{\mathrm{safe}}\in\mathbb{R}^{d_v\times d_k}$, indexed by $p_{\mathrm{safe}}$, with a key constructed through the same anchor-representation pathway. The layer-wise collection of these states forms the Safety State. Unlike sequence-derived anchors, it remains available from the first token without consuming input positions. Safin-1 learns this collection with token-level cross-entropy on harmful requests with safety-aligned responses and benign requests with helpful responses, while freezing the language-model backbone~\cite{safin1}.

\paragraph{Joint routing and readout.}
Let $\boldsymbol{\rho}_t=\mathbf{W}_{\rho}\mathbf{x}_t\in\mathbb{R}^{d_r}$ be the routing query. The candidate set is $\mathcal{C}_t=\mathcal{V}_t\cup\{p_{\mathrm{safe}},\varnothing\}$, with payloads $\mathbf{A}^{(p_{\mathrm{safe}})}=\mathbf{P}_{\mathrm{safe}}$ and $\mathbf{A}^{(\varnothing)}=\mathbf{0}$. The null option allows retrieval to be bypassed. Candidate scores and probabilities are
\begin{equation}
\begin{aligned}
s_{t,j}&=\begin{cases}
\boldsymbol{\rho}_t^{\top}\boldsymbol{\kappa}_j,&j\ne\varnothing,\\
\mathbf{w}_{\varnothing}^{\top}\mathbf{x}_t+b_{\varnothing},&j=\varnothing,
\end{cases}
&\pi_{t,j}&=\frac{\exp(s_{t,j})}{\sum_{j'\in\mathcal{C}_t}\exp(s_{t,j'})}.
\end{aligned}
\label{eq:march_routing}
\end{equation}
The router selects memories, while $\mathbf{q}_t$ reads their matrix-valued contents. The retrieved readout is added to the native recurrent path:
\begin{equation}
\mathbf{o}_t=\mathbf{S}_t\mathbf{q}_t+
\sum_{j\in\mathcal{C}_t}\pi_{t,j}\mathbf{A}^{(j)}\mathbf{q}_t.
\label{eq:march_readout}
\end{equation}
Thus, context anchors and persistent safety memory contribute according to the current token and context, without changing the base recurrent update. In \Projectname{}, this inherited interface supplies the memory foundation alongside REX and \ReSI{}; replay and the safety--utility gate in Section~\ref{sec:resi_arch} complement it during continual safety improvement.

\subsection{REX: Hazardous-Knowledge Disentanglement}
\label{sec:rex}

\begin{figure}[t]
\centering
\includegraphics[width=\textwidth]{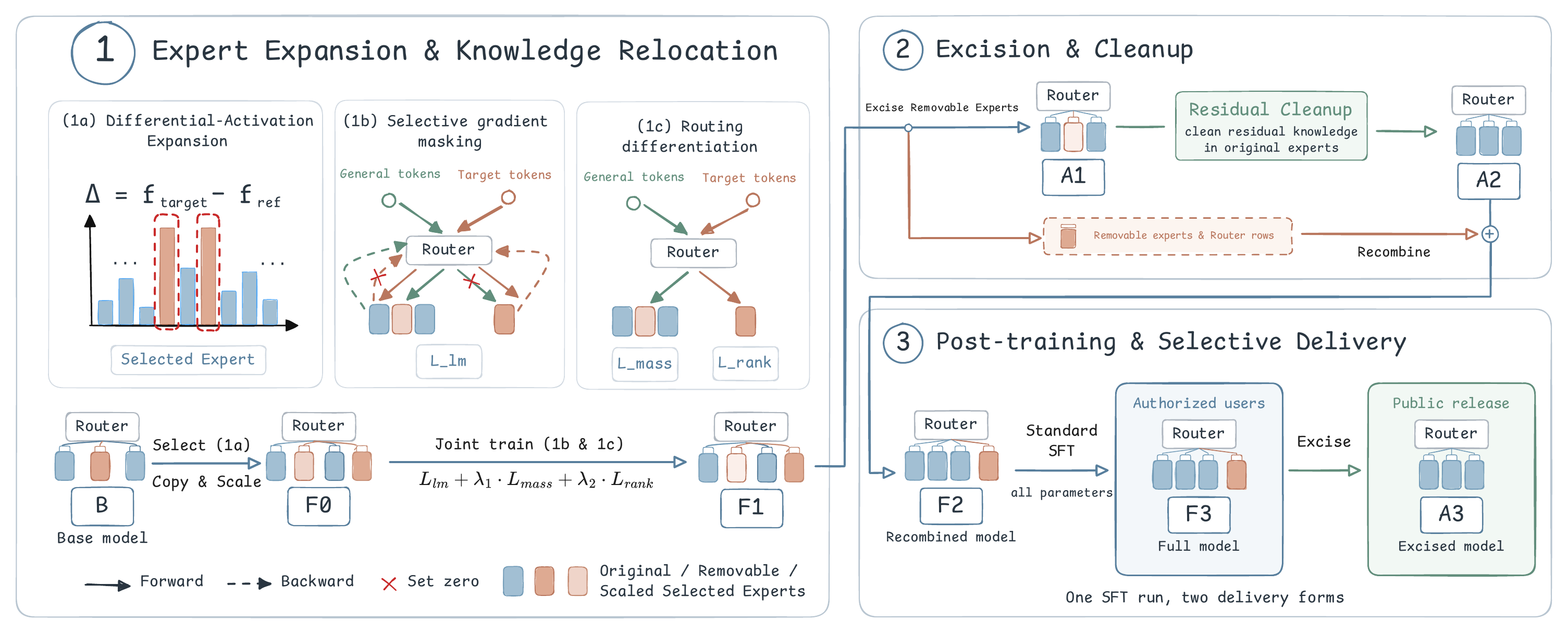}
\caption{\textbf{Overview of REX.}
(1) Expert expansion copies and scales the source experts of the base model $B$ to form removable experts ($F_0$), and joint training with selective gradient masking and routing differentiation relocates hazardous knowledge into them ($F_1$).
(2) Excising the removable experts yields $A_1$, and a light RMU cleanup yields the public base model $A_2$, while the removable experts and their router rows are kept.
(3) Recombining them with $A_2$ gives the full structure $F_2$, on which standard post-training produces the full model $F_3$ for authorized users; excising its removable experts produces the public model $A_3$.}
\label{fig:rex}
\end{figure}

Safety RSI requires that hazardous capabilities, once removed, stay removed as the model keeps being updated, and parameter-level unlearning has difficulty meeting this requirement because hazardous and general knowledge are stored in the same parameters.
Methods such as RMU~\cite{wmdp} and NPO~\cite{zhang2024negative} modify these shared parameters in place, so more thorough removal costs more general capability.
More importantly, such removal suppresses the expression of the knowledge rather than taking it out of the model, and subsequent fine-tuning, even on unrelated data, can bring it back~\cite{doshi2024does}.
In a one-off release this is already a risk; in Safety RSI, where the model is updated round after round, every update is another chance for the knowledge to return.
REX~\cite{jin2026architecturallevelsafetycontrolrelocating} addresses this problem through where the knowledge is stored.
The feed-forward parameters of a Mixture-of-Experts (MoE) model are partitioned into independently addressable experts, the expert set can be expanded, and selected experts can be deleted at deployment.
Once hazardous knowledge is concentrated in a set of removable experts, removing it no longer depends on in-place edits to shared parameters and becomes a structural operation.

REX relocates hazardous knowledge out of the original experts and into newly added removable experts through three components (Figure~\ref{fig:rex}).
The first component, \emph{expert expansion}, initializes the removable experts.
On the training corpora, we count how often each expert $i$ is selected into the top-$K$ on target data $\mathcal{D}_{\mathrm{harm}}$ and on general data {$\mathcal{D}_{\mathrm{gen}}$}, and compute the difference $\Delta_i = f_i^{\mathrm{harm}} - {f_i^{\mathrm{gen}}}$.
In each layer, the experts with the largest $\Delta_i$ serve as source experts.
Their parameters and router rows are copied to form the removable experts, and the source experts are scaled by a fixed factor {$\gamma$}, which weakens the hazardous knowledge they hold while preserving most of their general computation.
The second component, \emph{selective gradient masking}, determines where the updates from each data source can be written.
Let {$\Theta_{\mathrm{rem}}$} denote the removable experts, {$\Theta_{\mathrm{orig}}$} the original experts, and {$\Theta_{\mathrm{router}}$} all router rows; the updated parameters are
\begin{equation}
\begin{cases}
{\Theta_{\mathrm{rem}}} \cup {\Theta_{\mathrm{router}}} & \text{on } \mathcal{D}_{\mathrm{harm}},\\
{\Theta_{\mathrm{orig}}} \cup {\Theta_{\mathrm{router}}} & \text{on } {\mathcal{D}_{\mathrm{gen}}}.
\end{cases}
\label{eq:rex_mask}
\end{equation}
On general data, a pre-top-$K$ activation mask additionally removes the logits of the removable experts before selection, so that every general token is dispatched without them.
The original experts must therefore carry general capability on their own, and the hazardous knowledge they already hold receives no training signal and fades, which turns catastrophic forgetting during continued training into directed forgetting.
The third component, \emph{routing differentiation}, shapes the router with two losses.
A mass loss keeps the routing mass assigned to the removable experts above a floor on target tokens and below a ceiling on general tokens.
A rank loss keeps the largest removable-expert logit on general tokens a margin below the top-$K$ boundary of the original experts, so that the unmasked router at inference also leaves general tokens to the original experts.
The relocation objective is
\begin{equation}
\mathcal{L}_{\mathrm{REX}} = \mathcal{L}_{\mathrm{LM}} + \lambda_1 \mathcal{L}_{\mathrm{mass}} + \lambda_2 \mathcal{L}_{\mathrm{rank}},
\label{eq:rex_loss}
\end{equation}
where $\lambda_1$ and $\lambda_2$ weight the mass and rank losses, respectively, and the language-modeling loss is back-propagated under the masking rules of Eq.~\eqref{eq:rex_mask}.
During relocation, attention, embeddings, normalization layers, shared experts, and the LM head stay frozen, and only the routed experts and the router are trained.

After relocation, REX excises the removable experts together with their router rows and applies a light RMU cleanup to the retained parameters to remove residual target knowledge, which yields the public base model; the excised experts are kept by the developer.
The value of this design for Safety RSI lies in the updates that follow.
An update is not applied to the public model directly.
Instead, the kept removable experts are recombined with the cleaned parameters into the full structure, training proceeds on the full structure, and the removable experts are excised again at deployment.
During the update, the router continues to send target-related tokens to the removable experts, so these updates leave with the experts at deployment, while growth in general capability is written to the retained parameters and released with the public model.
The developer therefore always holds two forms of the model: a full model for authorized users that retains the hazardous capability, and a public model that contains no parameters of the removable experts.
What to train and what to release thus become two independent decisions, which is exactly the interface that the \ReSI{} loop needs when it accepts an update in each round (Section~\ref{sec:resi_arch}).
In the experiments, we denote the original model by $B$, the cleaned public base model by $A_2$, the recombined full model after standard post-training by $F_3$, and its excised counterpart by $A_3$.

\Projectname{} applies REX to Safin-1-35B-A3B without interfering with the memory mechanism of MARCH.
The memory-related parameters introduced by MARCH belong to the frozen shared parameters during both relocation and cleanup, so REX changes only the routed experts and the router.
Routing statistics, training-token counts, and RMU activations are computed on text positions, excluding the anchor positions that MARCH inserts into the sequence.
Unlike the original REX, which handles one domain at a time, \Projectname{} uses a single set of removable experts for both Bio and Cyber.
Expert expansion selects 8 source experts per layer for each domain by differential activation, takes their union, and completes it to 16 per layer by the joint differential across the two domains, expanding each layer from 256 to 272 experts with $\gamma=0.3$.
Relocation uses about 40M tokens of mixed target data, with Bio and Cyber at roughly 4:1, and about 27M tokens of general data.
The residual cleanup uses a forget budget of 32M tokens and updates only the routed-expert and shared-expert MLPs in layers 7--9, and standard post-training on the recombined model uses 250M tokens.
A single excision then removes the target knowledge of both domains.
 
\section{ReSI: Recursive Safety Improvement}
\label{sec:resi}

\ReSI{}~\cite{resi} builds on the safety architecture described in Section~\ref{sec:safety} to implement the risk discovery, evaluation, and repair loop of Safety RSI.
In each round, \ReSI{} applies diverse red-teaming methods to identify vulnerabilities in the current target model, develops a training recipe through automated research, and promotes the eligible update with the largest safety gain as the next target model, where eligibility requires passing a Pareto gate on safety gain, benign compliance, and instruction following (Figure~\ref{fig:resi}).
Through successive rounds, the loop pursues the two goals identified by R$^2$AI~\cite{r2ai2025,sunposition}: \emph{resistance} to known threats and \emph{resilience} to unforeseen risks.
Section~\ref{sec:resi_formulation} formulates the recursive update, Section~\ref{sec:resi_round} describes the three stages of a round and the joint recipe search space, and Section~\ref{sec:resi_arch} describes how the loop operates on the safety architecture.

\begin{figure}[htbp]
\centering
\includegraphics[width=\linewidth]{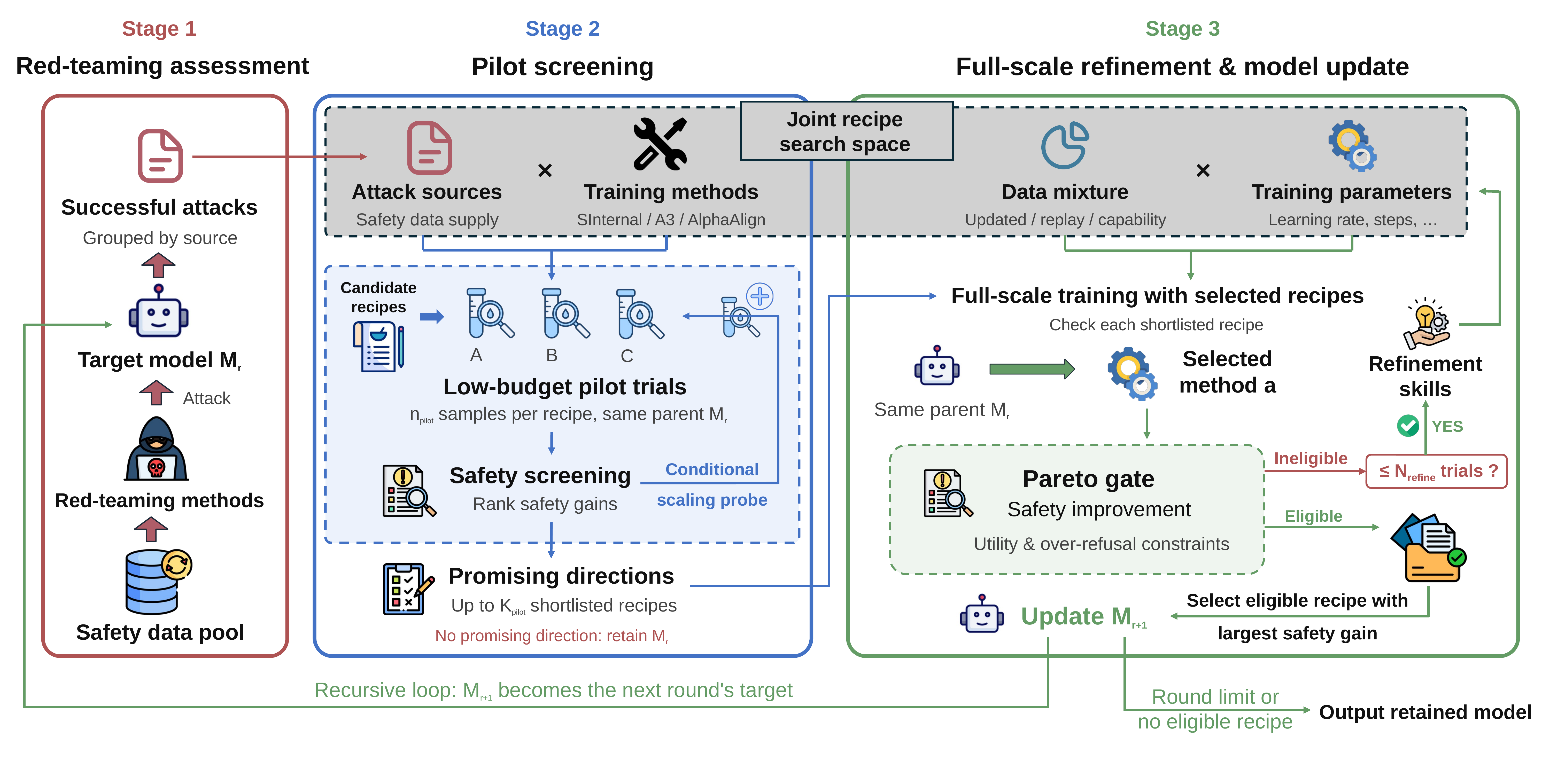}
\caption{\textbf{The three-stage \ReSI{} framework.}
In each round, \ReSI{} applies red-teaming methods to identify vulnerabilities in the current target model $M_r$, develops training recipes through automated research, and promotes the eligible update with the largest safety gain as $M_{r+1}$, the target for the next round.
Stage~1 collects successful attacks against $M_r$ (Section~\ref{sec:resi_redteam}).
Stage~2 screens attack sources and alignment methods through low-budget pilot trials and shortlists the most promising recipes (Section~\ref{sec:resi_pilot}).
Stage~3 conducts full-scale training and feedback-guided refinement, then selects the eligible checkpoint with the largest safety gain (Section~\ref{sec:resi_full}).}
\label{fig:resi}
\end{figure}

\subsection{Recursive Formulation}
\label{sec:resi_formulation}

\ReSI{} organizes safety improvement as a recursive loop over target models.
Specifically, \ReSI{} formulates the safety update within a round as an auto-research problem: given $M_r$ and the attacks that succeed against it, \ReSI{} searches for a training recipe that strengthens the model's resistance to the exposed attacks while preserving existing safety behavior and general capabilities.
Starting from an initial model $M_0$, round $r$ takes the current target model $M_r$ and produces the next target
\begin{equation}
M_{r+1}=
\begin{cases}
M_r^{\omega_r^\star}, & \mathcal{E}_r\neq\varnothing,\\[2pt]
M_r, & \mathcal{E}_r=\varnothing,
\end{cases}
\qquad
\omega_r^\star\in\operatorname*{arg\,max}_{\omega\in\mathcal{E}_r} g_r(\omega),
\label{eq:resi_update}
\end{equation}
where $M_r^{\omega}$ denotes the checkpoint that recipe $\omega$ produces in round $r$ from its specified starting checkpoint, $g_r(\omega)$ is its safety gain (Eq.~\eqref{eq:resi_gain}), and $\mathcal{E}_r$ is the set of recipes with completed full-scale trials, including the refinement trials of Section~\ref{sec:resi_full}, that pass the Pareto gate (Eq.~\eqref{eq:resi_gate}).
As a result, an accepted update becomes the target of the next round; if no recipe passes the gate, \ReSI{} retains $M_r$ and terminates, so that updates failing the gate never propagate to later rounds.
We use \emph{recursive} in a bounded, operational sense: the search procedure and the Pareto gate remain fixed across rounds, while the target model and the attacks collected against it change.

\subsection{The Three-Stage Round}
\label{sec:resi_round}

Each round of \ReSI{} proceeds in three stages over a joint recipe search space (Figure~\ref{fig:resi}).
Stage~1 collects successful attacks against $M_r$ (Section~\ref{sec:resi_redteam}), which define the attack data of the recipe search space (Section~\ref{sec:resi_space}).
Within this space, Stage~2 screens attack sources and alignment methods through pilot trials (Section~\ref{sec:resi_pilot}), and Stage~3 refines full-scale recipes and selects the model update under the Pareto gate (Section~\ref{sec:resi_full}).

\subsubsection{Stage 1: Red-Teaming Assessment}
\label{sec:resi_redteam}

Stage~1 identifies the vulnerabilities of the current target model and converts them into safety training signals.
Let $\mathcal{X}_r$ denote a safety data pool of requests that represent potential deployment risks, and let $\mathcal{Z}_r$ denote the set of available red-teaming methods.
Each method $z\in\mathcal{Z}_r$ draws requests from $\mathcal{X}_r$ to construct attacks against $M_r$, potentially adapting them based on the model's responses, and produces input--response pairs $z(M_r,\mathcal{X}_r)$.
\ReSI{} aggregates these records and retains those in which the attack elicits an unsafe response:
\begin{equation}
\mathcal{D}_r^{\mathrm{atk}}=
\Bigl\{(x,y)\in\textstyle\bigcup_{z\in\mathcal{Z}_r}z(M_r,\mathcal{X}_r)\;\Big|\;\operatorname{Unsafe}(x,y)=1\Bigr\},
\end{equation}
where $x$ is an attack input, $y$ is the corresponding response from $M_r$, and $\operatorname{Unsafe}(x,y)$ indicates a safety violation; $\mathcal{D}_{r,z}^{\mathrm{atk}}$ denotes the subset contributed by method $z$.
Since the attack dataset is re-collected for each target model, it captures the vulnerabilities of the current checkpoint, including those introduced by earlier updates.
Moreover, the safety data pool and the set of red-teaming methods are both extensible, so that new deployment risks and new red-teaming methods can enter the loop as they emerge.

\subsubsection{Joint Recipe Search Space}
\label{sec:resi_space}

Following Stage~1, \ReSI{} searches over a joint recipe space $\Omega_r$ of compatible combinations of attack data, alignment methods, training data mixtures, and hyperparameters.
A candidate recipe is represented as
\begin{equation}
\omega=(\boldsymbol{\alpha},a,\boldsymbol{\lambda},\mathbf{h})\in\Omega_r,
\end{equation}
where $\boldsymbol{\alpha}$, $a$, $\boldsymbol{\lambda}$, and $\mathbf{h}$ specify the attack data, alignment method, training data mixture, and hyperparameters, respectively.

\textbf{Attack data.}
The attack-data dimension $\boldsymbol{\alpha}$ controls the contribution of each red-teaming source.
For each method $z$, $\alpha_z\in[0,1]$ selects a fraction of $\mathcal{D}_{r,z}^{\mathrm{atk}}$, so that a recipe can use a single source, combine all available attacks, or reweight the sources.

\textbf{Alignment methods.}
The alignment method $a$ is drawn from an extensible pool $\mathcal{M}_{\mathrm{align}}$ and converts the selected attacks into training examples.
Representative methods in this pool and their design rationales are outlined below.
A3~\cite{a3_2026} pairs harmful requests with benign counterparts that share similar patterns and uses both for supervised fine-tuning (SFT), encouraging the model to learn safety distinctions that generalize beyond specific adversarial patterns.
SInternal~\cite{sinternal2026} fine-tunes the model to reproduce an expert's critique of its own response against safety specifications and then predict a safety judgment, strengthening its understanding of the specifications.
AlphaAlign~\cite{alphaalign2025} applies reinforcement learning with a reward that requires safe responses to harmful requests and safe, fully compliant responses to benign requests of similar surface form, so that unnecessary refusals receive no reward.

\textbf{Training data mixtures.}
The mixture $\boldsymbol{\lambda}=(\lambda_{\mathrm{cur}},\lambda_{\mathrm{rep}},\lambda_{\mathrm{gen}})$ balances learning from newly identified vulnerabilities with retaining existing safety behavior and general capabilities.
Specifically, the three weights of $\boldsymbol{\lambda}$ are non-negative, sum to one, and give the fractions of current safety examples, replay examples from earlier safety training, and general-task examples.

\textbf{Hyperparameters.}
The configuration $\mathbf{h}\in\mathcal{H}(a)$, drawn from the training settings available for method $a$, covers the learning rate, batch size, training budget, checkpoint frequency, the evaluated checkpoint step, adaptation settings such as LoRA~\cite{lora2021}, and generation parameters for reinforcement learning.

\subsubsection{Stage 2: Pilot Screening}
\label{sec:resi_pilot}

Pilot screening identifies promising attack sources and alignment methods through low-budget trials.
For each source $z$ providing at least $n_{\mathrm{pilot}}$ successful attacks and each method $a\in\mathcal{M}_{\mathrm{align}}$, \ReSI{} trains a pilot from $M_r$ with method $a$ on $n_{\mathrm{pilot}}$ attacks from $z$, using only current safety data ($\boldsymbol{\lambda}^{(0)}=(1,0,0)$), the default hyperparameters $\mathbf{h}_a^{(0)}$ of $a$, and a limited training budget.
In the first round, successful attacks from the original safety data pool form a shared base that each pilot uses in addition to its $n_{\mathrm{pilot}}$ source-specific attacks.
After training, each recipe is evaluated on a held-out validation set $\mathcal{D}_r^{\mathrm{val}}$ drawn from the original safety data pool.
However, when $M_r$ already has a low attack success rate (ASR) on that pool, $\mathcal{D}_r^{\mathrm{val}}$ is instead drawn from a more challenging attack set, and a candidate must also not raise ASR on the original pool by more than a tolerance $\delta$ relative to $M_r$.
The safety gain of $\omega$ is
\begin{equation}
g_r(\omega)=\operatorname{ASR}(M_r;\mathcal{D}_r^{\mathrm{val}})-\operatorname{ASR}(M_r^{\omega};\mathcal{D}_r^{\mathrm{val}}).
\label{eq:resi_gain}
\end{equation}
Although equal sample counts keep pilots comparable, they can understate a source that provides many more attacks than $n_{\mathrm{pilot}}$.
Therefore, for recipes with modest pilot safety gains but substantially more available attacks, \ReSI{} runs a conditional scaling probe, retraining the same recipe from $M_r$ with $n_{\mathrm{pilot}}<n_{\mathrm{scale}}\leq|\mathcal{D}_{r,z}^{\mathrm{atk}}|$ attacks and ranking it alongside the initial pilots.
Finally, \ReSI{} shortlists up to $K_{\mathrm{pilot}}$ recipes with the largest positive safety gains, $\mathcal{Q}_r=\operatorname{TopK}_{g_r}\{\omega:g_r(\omega)>0\}$; if $\mathcal{Q}_r=\varnothing$, it retains $M_r$ and terminates.

\subsubsection{Stage 3: Full-Scale Refinement and Model Update}
\label{sec:resi_full}

Given the shortlisted recipes $\mathcal{Q}_r$, \ReSI{} conducts full-scale trials, refines each recipe based on feedback from the Pareto gate, and selects the model update.
Specifically, each full-scale trial starts from $M_r$, expands the training data to all successful attacks from the source of the shortlisted recipe, and lifts the pilot budget cap, keeping $\boldsymbol{\lambda}^{(0)}$ and $\mathbf{h}_a^{(0)}$.
The trial saves checkpoints at the frequency specified in $\mathbf{h}$, and each saved checkpoint is evaluated as a candidate update.

\textbf{Pareto gate.}
The Pareto gate requires a safety improvement over the current model while keeping benign compliance and instruction following within a tolerance of the initial model.
A recipe passes the gate if
\begin{equation}
\mathcal{G}_r(\omega)=1
\quad\Longleftrightarrow\quad
\begin{cases}
g_r(\omega)>0,\\
\mathrm{BC}(M_r^{\omega})\geq\mathrm{BC}(M_0)-\epsilon,\\
\mathrm{IF}(M_r^{\omega})\geq\mathrm{IF}(M_0)-\epsilon,
\end{cases}
\label{eq:resi_gate}
\end{equation}
where $\mathrm{BC}$ denotes benign compliance, $\mathrm{IF}$ denotes instruction following, and $\epsilon$ is the retention tolerance.
Here, the benign-compliance condition limits over-refusal, and the instruction-following condition limits the loss of general capabilities.
Moreover, anchoring both retention thresholds to $M_0$ rather than to $M_r$ prevents small per-round losses within the tolerance from accumulating across rounds.

\textbf{Feedback-guided refinement.}
\ReSI{} revises a recipe according to its Pareto-gate result, allowing at most $N_{\mathrm{refine}}$ refinement trials beyond the initial trial and stopping earlier when no executable revision remains.
\begin{itemize}[leftmargin=*,itemsep=1pt,topsep=2pt]
\item \textbf{S1. Attack-source composition.} For limited safety gains, expand $\boldsymbol{\alpha}$ to combine successful attacks from additional promising sources.
\item \textbf{S2. Harmful--benign pairing.} For a benign-compliance violation, construct A3-style benign counterparts that preserve the attack patterns of harmful examples while replacing their harmful content, and adjust the harmful--benign balance.
\item \textbf{S3. Utility preservation.} For an instruction-following violation, increase general-task replay through $\boldsymbol{\lambda}$, or reduce the learning rate or the number of updates through $\mathbf{h}$.
\item \textbf{S4. Post-SFT safety GRPO.} For an eligible SFT recipe, retain it as a candidate and continue training from its checkpoint with AlphaAlign-based safety Group Relative Policy Optimization (GRPO)~\cite{deepseekmath2024} and DAPO-style optimization~\cite{yu2025dapo}; this transition resets the refinement counter to zero, allowing up to $N_{\mathrm{refine}}$ further refinement trials.
\end{itemize}
In practice, each revision specifies its starting checkpoint, while all trials share the same validation sets and gate thresholds.
Revisions restart from $M_r$ by default; a revision may instead continue from the evaluated checkpoint with a reduced training budget recorded in $\mathbf{h}$, and revisions within an S4 continuation restart from the same SFT checkpoint.

\textbf{Final selection and stopping.}
After all shortlisted recipes and their refinements have been evaluated, \ReSI{} updates the target model according to Eq.~\eqref{eq:resi_update}.
\ReSI{} promotes the eligible checkpoint with the largest safety gain, or retains $M_r$ and terminates when no recipe is eligible; the loop also terminates at the configured round limit.

\subsection{\ReSI{} on the Safety Architecture}
\label{sec:resi_arch}

\ReSI{} operates on the safety architecture of Section~\ref{sec:safety} without changing the three stages of a round or the Pareto gate: hazardous-knowledge disentanglement determines which model is trained and which is evaluated, safety memory supports the retention and reuse of safety experience across rounds, and formal assurance can check specified claims about the models that the loop promotes.

\textbf{Hazardous-knowledge disentanglement.}
On the disentangled architecture, \ReSI{} trains the full model but red-teams, validates, and gates its excised counterpart; the full checkpoint of an accepted update becomes the starting point of the next round, and its excised counterpart is the deployed model.
Specifically, REX~\cite{jin2026architecturallevelsafetycontrolrelocating} relocates targeted hazardous knowledge into removable experts that are excised for deployment.
Therefore, in each round, a recipe updates the full model with all experts, so that the router and the removable experts keep the roles assigned during relocation.
In contrast, training without the removable experts could push hazardous content contained in safety data into shared parameters, whereas evaluating with them would assess a model that is never deployed.
Additionally, \ReSI{} measures hazardous-knowledge accuracy for each trial on the excised checkpoint, so that any reintroduction of removed knowledge is visible when candidates are compared.

\textbf{Safety memory.}
Safety memory keeps safety experience from earlier rounds available to later updates.
On the model side, MARCH~\cite{zhang2026march} can provide a persistent interface through which this experience is stored and selectively invoked.
On the training side, the replay weight $\lambda_{\mathrm{rep}}$ complements this interface by keeping examples from earlier safety training in the training data to reduce forgetting of earlier repairs.

\textbf{Formal assurance.}
Formal assurance can complement the empirical evaluation of each model that \ReSI{} promotes.
\ReSI{} provides empirical evidence that an update reduces ASRs on the evaluated attacks, and specified claims about each promoted model can then be checked by the formal assurance harness in Section~\ref{sec:formal} under explicit assumptions.
 
\section{POGB: Formal Reasoning Harness}
\label{sec:formal}
Building on the model trained through the safety improvement process in Section~\ref{sec:resi}, we introduce POGB as an inference-time harness to extend its capabilities to formal reasoning, as illustrated in Figure~\ref{fig:pogb}.
An inference-time switch controls whether POGB is invoked; it is enabled only for logical reasoning tasks that ask whether a claim follows from given premises. When disabled, the model follows its standard inference procedure. 
When enabled, PoGB begins with a grounding autoformalization step, in which an LLM translates the problem expressed in NL into formal representations while maintaining a grounding table that records the type and associated NL information for each formal element.
This table then serves as a \emph{proof-time interface} for \emph{proof-obligation-guided search}.
During this search, type information constrains which instantiations are admissible, while NL information guides the identification of semantic bridges via an LLM-based $\operatorname{Gate}$ when symbolic reasoning encounters an unresolved proof obligation.
This design keeps all logical reasoning under the solver's control while restricting the LLM's role to autoformalization and the evaluation of candidate semantic bridges.

\begin{figure}[!t]
    \centering
    \includegraphics[width=0.99\linewidth, trim=0cm 0.2cm 0cm 0.2cm, clip]{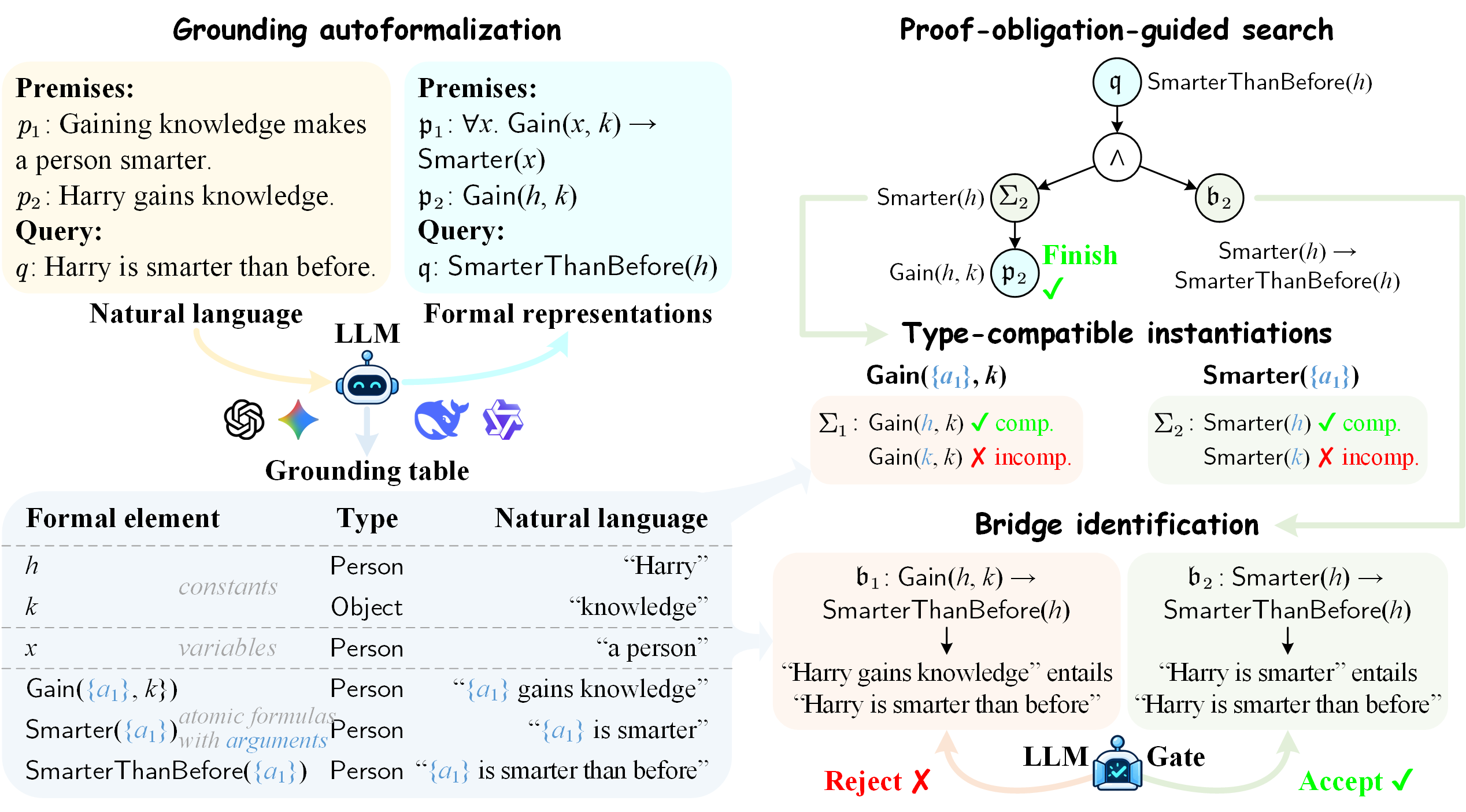}
    \caption{Illustration of the proposed approach.
    Variables in the atomic formulas of the grounding table are replaced with argument placeholders $\{a_i\}$, which are highlighted in blue and subsequently instantiated with type-compatible constants.}
    \label{fig:pogb}
\end{figure}

\subsection{Problem Formulation}
\label{sec:problem-formulation}

Let $\symbqa = (\symbPre,\symbhyp)$ denote a problem comprising a set of NL premises $\symbPre = \{\symbpre_i\}_{i=1}^n$ and an NL query $\symbhyp$.
We use a grounding autoformalization ${\symbAutofm}$ to map $\symbqa$ to ${\symbAutofm}(\symbqa) = (\symbPF,\symbHF,\symbSymTab)$, thereby yielding formal representations that are explicitly associated with their NL semantics in $\symbqa$.
Here, $\symbPF = \{\symbpf_i\}_{i=1}^n$ is the set of formalized premises, where $\symbpf_i$ denotes the $i$th formalized premise; $\symbHF$ is the formalized query; and $\symbSymTab$ is a grounding table that records the NL semantic grounding of every formal element (see Section~\ref{sec:grounding-interface} for details).
We adopt first-order logic (FOL) as the target formal language.

Consider two proof goals corresponding to the query and its negation: $\symbHF$ and $\neg\symbHF$.
For each goal $\symbgoal\in\{\symbHF,\neg\symbHF\}$, the symbolic solver performs a proof search and identifies unresolved proof obligations resulting from missing semantic relations.
Let $\symbSolver$ denote the symbolic solver.
Our objective is to find a \emph{reasonable} bridge set $\symbBridge$ that enables the proof to be completed:
\begin{equation}
\label{eq:accepted-completion}
\symbSolver(\symbPF\cup\symbBridge, \symbgoal) = \textsc{Proved}
\quad\land\quad
\operatorname{Gate}(\symbBridge, \symbSymTab) = \textsc{Accepted}.
\end{equation}
Here, $\symbBridge=\{\symbbridge_j\}_{j=1}^{m}$, where each $\symbbridge_j$ is a semantic bridge that explicitly encodes an entailment relation supported by the NL semantics but not represented in the symbolic ontology.
The LLM-based $\operatorname{Gate}$ evaluates only whether these candidate semantic bridges are reasonable given the grounding table $\symbSymTab$; it does not perform proof search, conduct intermediate reasoning, or repair failures in response to solver feedback.
The final decision is determined by the proof outcomes for the two goals:
\begin{equation*}
\operatorname{Decide}(\symbqa) =
\begin{cases}
\textsc{Entailed},
    & \bigl(\symbSolver(\symbPF\cup\symbBridge, \symbHF),\ \symbSolver(\symbPF\cup\symbBridge, \neg\symbHF)\bigr)
    = \bigl(\textsc{Proved},\ \textsc{Failed}\bigr),\\
\textsc{Contradicted},
    & \bigl(\symbSolver(\symbPF\cup\symbBridge, \symbHF),\ \symbSolver(\symbPF\cup\symbBridge, \neg\symbHF)\bigr)
    = \bigl(\textsc{Failed},\ \textsc{Proved}\bigr),\\
\textsc{Unknown},
    & \bigl(\symbSolver(\symbPF\cup\symbBridge, \symbHF),\ \symbSolver(\symbPF\cup\symbBridge, \neg\symbHF)\bigr)
    = \bigl(\textsc{Failed},\ \textsc{Failed}\bigr),\\
\textsc{Conflicted},
    & \bigl(\symbSolver(\symbPF\cup\symbBridge, \symbHF),\ \symbSolver(\symbPF\cup\symbBridge, \neg\symbHF)\bigr)
    = \bigl(\textsc{Proved},\ \textsc{Proved}\bigr).
\end{cases}
\label{eq:solver-decision}
\end{equation*}

\subsection{The Grounding Table as a Proof-Time Interface}
\label{sec:grounding-interface}

As discussed in Section~\ref{sec:introd}, even when individual NL statements are faithfully formalized, their semantically related expressions may still be mapped to symbolically unrelated predicates.
To recover the semantic relations between such predicates during proof search, we maintain two complementary kinds of information for every formal element.
Let $\symbterm$ denote a formal element occurring in $\symbPF$ or $\symbHF$, including a constant, a variable, or an atomic formula.
For each $\symbterm$, the grounding table $\symbSymTab$ records both type information, $\operatorname{type}_{\symbSymTab}(\symbterm)$, and NL information, $\operatorname{NL}_{\symbSymTab}(\symbterm)$, thereby \emph{providing an interface for proof-time inference}.

Inspired by type systems~\citep{cardelli1996type}, $\operatorname{type}_{\symbSymTab}(\symbterm)$ retains a coarse-grained semantic category assigned to $\symbterm$ by the LLM.
This category constrains which instantiations are admissible, thereby narrowing the search space for candidate semantic bridges.
For instance, in the running example in Appendix~\ref{app:exam}, a variable of type $\mathsf{Person}$ can be instantiated with $h$ (Harry) but not with $w$ (\emph{Walden}).
By contrast, $\operatorname{NL}_{\symbSymTab}(\symbterm)$ retains the finer-grained NL semantics of $\symbterm$ by storing referring expressions (e.g., “Harry,” “\emph{Walden},” “a person,” and “a book”) and predicate meanings (e.g., “$\{a_1\}$ reads $\{a_2\}$”).
When exact symbolic matching fails, this information guides the search for plausible semantic bridges between \emph{type-compatible} candidates, such as the bridge between $\mathsf{Smarter}(h)$ and $\mathsf{SmarterThanBefore}(h)$.
We next show how this interface supports \emph{proof-obligation-guided search}.

\subsection{Proof-Obligation-Guided Search}
\label{sec:proof-completion}

We propose a proof-search method controlled by a symbolic solver and guided by proof obligations, as described in Algorithms~\ref{alg:direct-search} and \ref{alg:grounding-retrieval}.
Algorithm~\ref{alg:direct-search} recursively decomposes each obligation through symbolic reasoning.
When symbolic reasoning alone is insufficient to discharge an obligation, Algorithm~\ref{alg:direct-search} invokes Algorithm~\ref{alg:grounding-retrieval} to identify local semantic bridges.
Algorithm~\ref{alg:grounding-retrieval} then returns the identified bridges together with the residual obligations required to establish each bridge, after which Algorithm~\ref{alg:direct-search} recursively discharges these residual obligations.
A running example is presented in Appendix~\ref{app:exam}.

\begin{algorithm}[!t]
\caption{Proof-obligation-guided search}
\label{alg:direct-search}
\begin{algorithmic}[1]\small
\Require A set of formalized premises~$\symbPF$, a symbolic solver~$\symbSolver$, a proof obligation~$\phi$, a set of ancestor obligations~$\symbAncestor$, and a grounding table~$\symbSymTab$.
\Ensure A set of accepted bridges sufficient to prove $\phi$ from $\symbPF$, or \textsc{Failed} if the search fails.
\Function{Prove}{$\phi, \symbAncestor, \symbSymTab$}
    \If{$\symbSolver(\symbPF, \phi) = \textsc{Proved}$}
    \Comment{Discharge $\phi$ directly with the symbolic solver}
        \State \Return $\varnothing$
    \EndIf
    \State $\phi \gets \Call{Normalize}{\phi}$
    \Comment{Rewrite $\phi$ in canonical form}
    \If{$\Call{Cyclic}{\phi, \symbAncestor}$}
    \Comment{Reject circular reasoning}
        \State \Return \textsc{Failed}
    \EndIf
    \State $\symbAncestor^\prime \gets \symbAncestor \cup \{\phi\}$
    \Comment{Record the current proof obligation to check the circular reasoning}
    \If{$\phi = \exists x.\,\psi$}
    \Comment{Branch over type-compatible instantiations of $x$}
        \State \Return any successful result of $\Call{Prove}{\psi^\prime, \symbAncestor^\prime, \symbSymTab}$ with $\psi^\prime \in \Call{Inst}{\psi,\symbSymTab}$
    \ElsIf{$\phi = \forall x.\,\psi$}
    \Comment{Introduce a fresh constant ${c_{\mathrm{fresh}}}$ and prove $\psi$ after mapping variable $x$ to ${c_{\mathrm{fresh}}}$}
        \State \Return $\Call{Prove}{\psi[x \mapsto {c_{\mathrm{fresh}}}], \symbAncestor^\prime, \symbSymTab \cup \{ ({c_{\mathrm{fresh}}}, \operatorname{type}_{\symbSymTab}(x), \operatorname{NL}_{\symbSymTab}(x)) \}}$
    \ElsIf{$\phi = \phi_1 \land \phi_2$}
    \Comment{Discharge both sub-obligations}
        \State \Return $\Call{Prove}{\phi_1, \symbAncestor^\prime, \symbSymTab} \cup \Call{Prove}{\phi_2, \symbAncestor^\prime, \symbSymTab}$
    \ElsIf{$\phi = \phi_1 \lor \phi_2$}
    \Comment{Discharge at least one sub-obligation}
       \State \Return the successful result of $\Call{Prove}{\phi_1, \symbAncestor^\prime, \symbSymTab}$ or $\Call{Prove}{\phi_2, \symbAncestor^\prime, \symbSymTab}$
    \EndIf
    \For{each $(\symbBridge, \symbobligation)$ yielded by $\Call{Identify}{\phi, \symbPF, \symbSymTab}$}
    \Comment{$\Call{Identify}{\phi, \symbPF, \symbSymTab}$ is given in Algorithm~\ref{alg:grounding-retrieval}}
        \State $\symbBridge_{\symbobligation} \gets \Call{Prove}{\symbobligation, \symbAncestor^\prime, \symbSymTab}$
        \If{$\symbBridge_{\symbobligation} \neq \textsc{Failed}$}
        \Comment{The residual proof obligation has been discharged}
            \State \Return $\symbBridge \cup \symbBridge_{\symbobligation}$
            \Comment{Combine the accepted bridges}
        \EndIf
    \EndFor
    \State \Return \textsc{Failed}
    \Comment{The search fails}
\EndFunction
\end{algorithmic}
\end{algorithm}

\subsubsection{Solver-Controlled Recursion}

Algorithm~\ref{alg:direct-search} defines a recursive function, $\Call{Prove}{\phi, \symbAncestor, \symbSymTab}$, that performs proof-obligation-guided search.
Here, $\phi$ denotes the current proof obligation and is initially set to $\symbgoal$; $\symbAncestor$ denotes the set of ancestor obligations to detect circular reasoning and is initially empty; and $\symbSymTab$ is the grounding table.
The solver first attempts to discharge $\phi$ directly through symbolic reasoning.
If this attempt fails, it normalizes $\phi$ and distinguishes cases based on the top-level logical structure of $\phi$.

For an existential obligation $\phi = \exists x.\,\psi$, the solver branches over the type-compatible instantiations of $x$ generated by the operator $\Call{Inst}{\psi, \symbSymTab}$, where $x$ may denote a set of variables.
More precisely, for a formula $\psi$ containing variables $x_1,\ldots,x_n$,
\begin{equation*} 
    \Call{Inst}{\psi, \symbSymTab} = \left\{ \psi[\theta] \;\middle|\; \theta = \{x_i \mapsto c_i\}_{i=1}^{n} \text{ and } \operatorname{type}_{\symbSymTab}(c_i) \preceq \operatorname{type}_{\symbSymTab}(x_i) \text{ for all } i \right\}.
\end{equation*}
Here, $\theta$ is a substitution that maps each variable $x_i$ to a constant $c_i$; $\psi[\theta]$ denotes the result of applying $\theta$ to $\psi$; and $\tau_1\preceq\tau_2$ denotes that $\tau_1$ is a subtype of $\tau_2$, such as $\mathsf{Student}\preceq\mathsf{Person}$.
Type information thus rules out type-incompatible instantiations before bridge identification begins.
For a universal obligation $\phi = \forall x.\,\psi$, the solver introduces a fresh constant ${c_{\mathrm{fresh}}}$ representing an arbitrary value of $x$ and then proves $\psi[x\mapsto{c_{\mathrm{fresh}}}]$, following the standard eigenvariable technique~\citep{lincoln1994proof}.
For a conjunctive obligation $\phi = \phi_1 \land \phi_2$, both sub-obligations must be discharged.
In contrast, for a disjunctive obligation $\phi = \phi_1 \lor \phi_2$, it suffices to discharge \mbox{at least one sub-obligation.}

\begin{algorithm}[!t]
\caption{Semantic bridge identification}
\label{alg:grounding-retrieval}
\begin{algorithmic}[1]\small
\Require An unresolved obligation $\phi$, a set of formalized premises $\symbPF$, a grounding table $\symbSymTab$, and a threshold ${\tau_{\mathrm{NLI}}}$.
\Ensure A set of accepted semantic bridge (possibly \textsc{Failed}) and its corresponding residual obligation.
\Function{Identify}{$\phi, \symbPF, \symbSymTab$}
    \State ${\mathcal{Q}_{\mathrm{bridge}}} \gets \varnothing$
    \Comment{Initialize the candidate pool}
    \ForAll{$\symbInst \subseteq \bigcup_{\symbpf \in \symbPF} \Call{Inst}{\symbpf, \symbSymTab}$}
    \Comment{Enumerate subsets of all type-compatible instantiations}
        \ForAll{${\mathbb{S}_{\mathrm{lit}}} \subseteq \bigcup_{\symbinst \in \symbInst} \Call{Lit}{\symbinst}$ with $|{\mathbb{S}_{\mathrm{lit}}}| \leq 2$}
        \Comment{Enumerate support sets comprising at most two literals}
            \State $\Sigma \gets \bigwedge_{{\ell} \in {\mathbb{S}_{\mathrm{lit}}}} {\ell}$
            \Comment{Construct the candidate semantic support}
            \State $\symbobligation \gets \Call{Abduce}{\symbInst, \Sigma}$
            \Comment{Derive the residual proof obligation}
            \State ${s_{\mathrm{sim}}} \gets \operatorname{sim} (\operatorname{NL}_{\symbSymTab}(\Sigma), \operatorname{NL}_{\symbSymTab}(\phi))$
            \Comment{Compute semantic similarity}
            \State ${\mathcal{Q}_{\mathrm{bridge}}} \gets {\mathcal{Q}_{\mathrm{bridge}}} \cup \{(\Sigma, \symbobligation, {s_{\mathrm{sim}}})\}$
        \EndFor
    \EndFor
    \For{$(\Sigma, \symbobligation, {s_{\mathrm{sim}}}) \in {\mathcal{Q}_{\mathrm{bridge}}}$ in nonincreasing order of ${s_{\mathrm{sim}}}$}
    \Comment{Evaluate candidates in ranked order}
        \State $\symbBridge \gets \varnothing$ if $\Sigma = \phi$; otherwise $\symbBridge \gets \{\Sigma \rightarrow \phi\}$
        \If{$\symbBridge = \varnothing$ \textbf{or} $\Call{NLI}{\operatorname{NL}_{\symbSymTab}(\Sigma), \operatorname{NL}_{\symbSymTab}(\phi)} \geq {\tau_{\mathrm{NLI}}}$}
        \Comment{Perform NLI filtering}
            \If{$\operatorname{Gate}(\symbBridge,\symbSymTab) = \textsc{Accepted}$}
            \Comment{Evaluate the bridge using the LLM-based $\operatorname{Gate}$}
                \State \textbf{yield} $(\symbBridge, \symbobligation)$
            \EndIf
        \EndIf
    \EndFor
\EndFunction
\end{algorithmic}
\end{algorithm}

\subsubsection{Semantic Bridge Identification}

If $\phi$ remains unresolved after the above process, Algorithm~\ref {alg:grounding-retrieval} is invoked to identify bridges that can provide semantic support for $\phi$.
The key idea is to address two questions separately: \emph{which semantic support can bridge the gap to the current proof obligation}, and \emph{what residual obligation must still be discharged to establish that support}.
The former is addressed by LLM-based evaluation, whereas the latter is addressed by symbolic reasoning.

Specifically, for each formalized premise $\symbpf \in \symbPF$, the operator $\Call{Inst}{\symbpf, \symbSymTab}$ first generates type-compatible instantiations.
For each instantiation $\symbinst$, let $\Call{Lit}{\symbinst}$ denote the set of literals it contains, where a literal is an atomic formula or its negation.
For example, the instantiation $\mathsf{Gain}(h,k) \rightarrow \mathsf{Smarter}(h)$ is equivalent to $\neg\mathsf{Gain}(h,k) \lor \mathsf{Smarter}(h)$, which contains the literals $\neg\mathsf{Gain}(h,k)$ and $\mathsf{Smarter}(h)$.
Algorithm~\ref{alg:grounding-retrieval} then considers a subset $\symbInst$ of all type-compatible instantiations and selects a support set ${\mathbb{S}_{\mathrm{lit}}} \subseteq \bigcup_{\symbinst \in \symbInst} \Call{Lit}{\symbinst}$, whose literals may come from multiple premises and may jointly support the current obligation $\phi$.
For computational tractability, ${\mathbb{S}_{\mathrm{lit}}}$ is restricted to at most two literals.
The literals in ${\mathbb{S}_{\mathrm{lit}}}$ are conjoined to form a candidate \emph{semantic support} $\Sigma = \bigwedge_{{\ell} \in {\mathbb{S}_{\mathrm{lit}}}} {\ell}$, yielding a candidate \emph{semantic bridge} $\Sigma \rightarrow \phi$.
For example, when $\phi = \mathsf{SmarterThanBefore}(h)$, selecting ${\mathbb{S}_{\mathrm{lit}}} = \{\mathsf{Smarter}(h)\}$ yields $\Sigma = \mathsf{Smarter}(h)$ and the corresponding bridge $\mathsf{Smarter}(h) \rightarrow \mathsf{SmarterThanBefore}(h)$.
Next, candidate semantic bridges are filtered using a lightweight semantic similarity model and a natural language inference~(NLI) model, both of which take $\operatorname{NL}_{\symbSymTab}(\Sigma)$ and $\operatorname{NL}_{\symbSymTab}(\phi)$ as inputs.
Finally, the LLM-based $\operatorname{Gate}$ evaluates whether each remaining bridge holds.

Even after a bridge is accepted by $\operatorname{Gate}$, its support $\Sigma$ may remain unresolved.
We therefore use logical abduction~\citep{josephson1996abductive} to derive a residual proof obligation $\symbobligation$ that, together with the instantiations in $\symbInst$, is sufficient to establish $\Sigma$:
\begin{equation*}
\symbobligation = \Call{Abduce}{\symbInst,\Sigma},
\qquad
\left(\bigwedge\nolimits_{\symbinst \in \symbInst}\symbinst\right) \land \symbobligation \models \Sigma.
\end{equation*}
The residual obligation $\symbobligation$ is then passed to Algorithm~\ref{alg:direct-search} for recursive discharge.
Thus, Algorithm~\ref{alg:grounding-retrieval} decouples semantic understanding from logical reasoning:
neural models evaluate only the candidate semantic bridge $\Sigma \rightarrow \phi$, whereas the residual obligation required to establish $\Sigma$ is derived symbolically through logical abduction and must still be discharged by the symbolic solver.

\subsubsection{Termination and Soundness}

When the search succeeds, Algorithm~\ref{alg:direct-search} returns a bridge set that satisfies both criteria in Equation~\eqref{eq:accepted-completion}:
the set enables the solver to complete the proof of goal $\symbgoal$, and every bridge in the set is accepted by $\operatorname{Gate}$.
Theoretical proof of this soundness guarantee is provided in Appendix~\ref{app:recursive-soundness}.
A search branch fails if the current obligation already appears on its ancestor path, if all candidate bridges are rejected, or if a search limit is reached.
The ancestor check aims to prevent circular reasoning~\citep{bol1991analysis}: if an attempt to prove $\phi$ generates $\psi$ and a subsequent attempt to prove $\psi$ regenerates $\phi$, the repeated obligation is rejected because it cannot serve as its own justification.

 
\section{Experimental Results}
\label{sec:exp}

\subsection{Experimental Setup}
\label{sec:exp_setup}

\textbf{Models.}
We compare \Projectname{} with Safin-1 (35B-A3B)~\cite{safin1}, the base model.
\Projectname{} incorporates REX~\cite{jin2026architecturallevelsafetycontrolrelocating} and \ReSI{} into this base model.
Post-training and \ReSI{} updates are performed on the full model, including the removable experts introduced by REX (Section~\ref{sec:rex}).

\textbf{\ReSI{} instantiation.}
We instantiate the extensible pools, training settings, and validation sets of \ReSI{} with representative resources.
The safety data pool is initialized with the WildJailbreak training split~\cite{wildteaming2024}.
The red-teaming pool contains DARWIN and JailbreakSkill~\cite{darwin2026,jailbreakskill2026}.
The alignment-method pool contains A3, SInternal, and AlphaAlign (Section~\ref{sec:resi_space}).
General-task examples are drawn from Dolci-Instruct-SFT\footnote{\url{https://huggingface.co/datasets/allenai/Dolci-Instruct-SFT}}.
All updates use LoRA with rank 32 and $\alpha=64$.
Recipe development and checkpoint selection use a development panel drawn from WildJailbreak, H-CoT, XSTest, and IFEval, which provides the validation sets for the safety gain and the Pareto gate (Sections~\ref{sec:resi_pilot} and~\ref{sec:resi_full}).

\textbf{Evaluation.}
Following \ReSI{}~\cite{resi}, we evaluate resistance, resilience, benign compliance, and retention of general capabilities, and additionally measure hazardous knowledge.
Resistance is measured by attack success rate (ASR, lower is better) on WildJailbreak, which shares its distribution with the safety data pool, and on HarmBench~\cite{harmbench2024}, which measures retention of existing defenses.
Resilience is assessed on two attack mechanisms outside the red-teaming pool: H-CoT~\cite{hcot2025}, which hijacks the model's safety reasoning with injected chain-of-thought content, and X-Teaming~\cite{xteaming2025}, an adaptive multi-turn attack.
Benign compliance is measured by the full-compliance rate on benign XSTest requests~\cite{xstest2023}.
General capabilities are measured by MMLU-Pro~\cite{mmlupro2024} and IFEval~\cite{ifeval2023}.
Hazardous knowledge is measured by multiple-choice accuracy on WMDP-Bio and WMDP-Cyber~\cite{wmdp}.
\Projectname{} and the REX-based variants in the ablation study are evaluated after excision.

\textbf{Formalization.} 
We integrate POGB into \Projectname{} as an inference-time component enabled only for logical reasoning tasks. We evaluate \Projectname{} on four logical reasoning benchmarks: FOLIO~\citep{han2024folio}, ProofWriter~\citep{tafjord2021proofwriter}, PrOntoQA~\citep{saparov2023language}, and ProverQA-Hard~\citep{qi2025large}. These benchmarks cover linguistically diverse, human-written problems and template-constrained synthetic tasks with varying reasoning complexity. We compare \Projectname{} with Gemini3-Flash~\citep{googledeepmind2025gemini3flash}, DeepSeek-V3.2~\citep{liu2025deepseek}, Qwen3.8-Flash~\citep{qiu2026design}, and GPT-5.5~\citep{openai2026gpt55}, GPT-5.6-Luna~\cite{gpt56},Claude Opus 5~\cite{claudeopus5systemcard2026},GLM-5.3~\cite{glm53model2026},Kimi K3~\cite{kimik3report2026}. each evaluated using Direct Prediction (Direct) and Chain-of-Thought (CoT) prompting. For each problem, \Projectname{} samples up to three formalizations, stopping once a valid formal proof is found and falling back to CoT if all attempts fail. For semantic bridge identification, \Projectname{} uses \texttt{all-MiniLM-L6-v2}~\citep{reimers2019sentence} to compute semantic similarity and \texttt{roberta-large-mnli}~\citep{liu2019roberta} for natural language inference. The NLI entailment threshold $\tau_{\mathrm{NLI}}$ is set to 0.80 across all datasets. We report accuracy on each benchmark and an overall average weighted by the evaluation set size of each dataset.

\subsection{Main Results}
\label{sec:exp_main}

We compare \Projectname{} with Safin-1 (Table~\ref{tab:exp_main}), place it among frontier models (Table~\ref{tab:exp_frontier}), and integrate it with formalization (Table~\ref{tab:logres}).

Results on H-CoT and X-Teaming show that \Projectname{} substantially strengthens resilience to attack mechanisms outside the red-teaming pool compared with Safin-1.
As shown in Table~\ref{tab:exp_main}, H-CoT ASR decreases from 82.00\% to 14.00\%, and X-Teaming ASR from 64.78\% to 32.08\%.

Moreover, \Projectname{} improves resistance on WildJailbreak, and its HarmBench ASR remains low.
WildJailbreak ASR decreases from 2.10\% for Safin-1 to 1.20\%, and HarmBench ASR is 1.25\%, compared with 0.25\% for Safin-1.

Compared with Safin-1, \Projectname{} retains instruction following and knowledge, with a small decrease in benign compliance.
Specifically, IFEval accuracy changes from 85.58\% to 87.06\% and MMLU-Pro accuracy from 77.10\% to 76.80\%, while benign compliance on XSTest decreases from 88.40\% to 86.80\%.

\begin{table}[!t]\small
\renewcommand{\arraystretch}{1.1}
\centering
\caption{\textbf{Comparison between Safin-1 and \Projectname{} (\%).}
WildJailbreak, HarmBench, H-CoT, and X-Teaming report ASR;
XSTest reports benign compliance; IFEval reports loose prompt-level accuracy;
WMDP reports multiple-choice accuracy.}
\label{tab:exp_main}
\setlength{\tabcolsep}{4.0pt}
\begin{tabular}{lcc}
\toprule
\textbf{Metric}
& \textbf{Safin-1}
& \textbf{\Projectname{}} \\
\midrule
WildJailbreak $\downarrow$ & 2.10 & \textbf{1.20} \\
HarmBench $\downarrow$ & \textbf{0.25} & 1.25 \\
H-CoT $\downarrow$ & 82.00 & \textbf{14.00} \\
X-Teaming $\downarrow$ & 64.78 & \textbf{32.08} \\
\midrule
XSTest $\uparrow$ & \textbf{88.40} & 86.80 \\
\midrule
MMLU-Pro $\uparrow$ & \textbf{77.10} & 76.80 \\
IFEval $\uparrow$ & 85.58 & \textbf{87.06} \\
\midrule
WMDP-Bio $\downarrow$ & 82.56 & \textbf{43.68} \\
WMDP-Cyber $\downarrow$ & 58.33 & \textbf{33.17} \\
\bottomrule
\end{tabular}
\end{table}

Compared with frontier models, \Projectname{} achieves the lowest X-Teaming and WildJailbreak ASRs and a competitive H-CoT ASR, while its benign compliance is the lowest in the table (Table~\ref{tab:exp_frontier}).
The X-Teaming ASR of \Projectname{} is 32.08\%, compared with 56.69--84.85\% for the frontier models.
Moreover, its WildJailbreak ASR of 1.20\% is lower than those of all frontier models, which range from 1.90\% to 9.25\%.
On H-CoT, \Projectname{} reaches 14.00\%, lower than Claude Opus 5, GLM-5.3, and Kimi K3 and higher than GPT-5.6-Luna.
However, its benign compliance of 86.80\% is lower than those of all other models in the table.

\begin{table}[!t]\small
\renewcommand{\arraystretch}{1.1}
\centering
\caption{\textbf{Comparison with frontier models (\%).}
WJB and HB denote WildJailbreak and HarmBench.
Safety columns report ASR; XSTest reports benign compliance.}
\label{tab:exp_frontier}
\setlength{\tabcolsep}{4.0pt}
\begin{tabular}{lccccc}
\toprule
\textbf{Model}
& \textbf{WJB} $\downarrow$
& \textbf{HB} $\downarrow$
& \textbf{H-CoT} $\downarrow$
& \textbf{X-Teaming} $\downarrow$
& \textbf{XSTest} $\uparrow$ \\
\midrule
GPT-5.6-Luna~\cite{gpt56}
& 1.90 & \textbf{0.00} & \textbf{8.00} & 56.69 & \textbf{100.00} \\
Claude Opus 5~\cite{claudeopus5systemcard2026}
& 9.25 & 3.50 & 38.00 & 84.85 & 98.80 \\
GLM-5.3~\cite{glm53model2026}
& 5.35 & 3.25 & 24.00 & 76.73 & 99.20 \\
Kimi K3~\cite{kimik3report2026}
& 4.50 & 3.75 & 18.00 & 77.80 & 99.60 \\
\midrule
Safin-1
& 2.10 & 0.25 & 82.00 & 64.78 & 88.40 \\
\Projectname{}
& \textbf{1.20} & 1.25 & 14.00 & \textbf{32.08} & 86.80 \\
\bottomrule
\end{tabular}
\end{table}
As shown in Table~\ref{tab:logres}, \Projectname{} combines strong predictive accuracy with verifiable reasoning. On FOLIO, it achieves the highest accuracy of 88.73\%, surpassing the strongest baseline, Claude Opus 5 with Direct prompting, by 0.98 percentage points. It also attains 100.00\% on PrOntoQA, matching the best-performing baselines, and near-perfect accuracy of 99.50\% on ProofWriter, only 0.33 percentage points below the best result. On ProverQA-Hard, \Projectname{} achieves 75.60\% accuracy. Across all 1,804 evaluation examples, its overall accuracy reaches 91.80\%, exceeding every evaluated baseline configuration except those using Claude Opus 5. Beyond answer accuracy, \Projectname{} provides machine-checkable proofs whenever formal proof search succeeds. These proofs allow the corresponding conclusions to be independently verified from the formalized premises and accepted semantic bridges, providing an explicit verification mechanism alongside competitive end-to-end performance.

\begin{table}[!t]\small
    \renewcommand{\arraystretch}{1.1}
    \centering
    \caption{\textbf{End-to-end reasoning accuracy (\%).} The evaluation set size ($N$) for each dataset is shown below its column header.}
    \label{tab:logres}
    \setlength{\tabcolsep}{4.0pt}
    \begin{tabular}{llccccc}
    \toprule
    \textbf{LLM} & \textbf{Method}
    & \makecell{\textbf{FOLIO} $\uparrow$\\\textcolor{gray}{$N = 204$}}
    & \makecell{\textbf{ProofWriter} $\uparrow$\\\textcolor{gray}{$N = 600$}}
    & \makecell{\textbf{PrOntoQA} $\uparrow$\\\textcolor{gray}{$N = 500$}}
    & \makecell{\textbf{ProverQA} $\uparrow$\\\textcolor{gray}{$N = 500$}}
    & \textbf{Average} $\uparrow$ \\
    \midrule
    \multirow{2}{*}{{Gemini3-Flash}}
    & Direct & 69.61 & 77.50 & 95.80 & 65.00 & 78.22 \\
    & CoT    & 83.82 & 95.33 & \textbf{100.00} & 74.80 & 89.63 \\
    \midrule
    \multirow{2}{*}{{DeepSeek-V3.2}}
    & Direct & 65.69 & 57.67 & 68.20 & 58.80 & 61.81 \\
    & CoT    & 84.80 & 91.67 & \textbf{100.00} & 69.40 & 87.03 \\
    \midrule
    \multirow{2}{*}{{Qwen3.8-Flash}}
    & Direct & 68.14 & 72.83 & 90.80 & 64.00 & 74.83 \\
    & CoT    & 82.35 & 94.33 & \textbf{100.00} & 73.40 & 88.75 \\
    \midrule
    \multirow{2}{*}{{GPT-5.5}}
    & Direct & 77.45 & 98.67 & \textbf{100.00} & 67.80 & 88.08 \\
    & CoT    & 79.41 & 98.50 & \textbf{100.00} & 66.40 & 87.86 \\
    \midrule
    \multirow{2}{*}{{GPT-5.6-Luna}}
    & Direct & 62.75 & 95.83 & 99.80 & 48.60 & 80.10 \\
    & CoT    & 74.02 & \textbf{99.83} & \textbf{100.00} & 51.80 & 83.65 \\
    \midrule
    \multirow{2}{*}{{Claude Opus 5}}
    & Direct & 87.75 & 98.50 & \textbf{100.00} & \textbf{84.00} & \textbf{93.68} \\
    & CoT    & 82.35 & 98.17 & 99.80 & 81.80 & 92.29 \\
    \midrule
    \multirow{2}{*}{{GLM-5.3}}
    & Direct & 81.37 & 98.50 & \textbf{100.00} & 62.00 & 86.86 \\
    & CoT    & 82.84 & 98.83 & \textbf{100.00} & 69.80 & 89.30 \\
    \midrule
    \multirow{2}{*}{{Kimi K3}}
    & Direct & 83.33 & 98.50 & \textbf{100.00} & 76.20 & 91.02 \\
    & CoT    & 83.82 & 98.17 & \textbf{100.00} & 76.00 & 90.91 \\
    \midrule
    {\Projectname{}}
    & N/A & \textbf{88.73} & 99.50 & \textbf{100.00} & 75.60 & 91.80 \\
    \bottomrule
    \end{tabular}
\end{table}

\subsection{Analysis}
\label{sec:exp_analysis}

\subsubsection{Safety-Native Architecture}
\label{sec:exp_moe}

We first evaluate REX itself on Safin-1-35B-A3B, against the same RMU recipe applied directly to the original model.
$R$ applies the RMU operator, hyperparameters, and forget budget of the residual cleanup directly to $B$, so that $R$ and $A_2$ differ only in their starting point and retain reference.
For post-training, three routes use the identical 250M-token standard full-parameter SFT with the same data order and random seed, and differ only in their starting points.
$A_3$, which is Safin-1+REX, is trained on the recombined full model as $F_3$ and excised at deployment; $A_3$-direct is trained directly on the excised $A_2$; and $R$-SFT is trained on $R$.
Hazardous knowledge is measured by WMDP-Bio and WMDP-Cyber, both four-way multiple choice with a chance level of 25\%, and general knowledge by MMLU.
Results are listed in Table~\ref{tab:exp_rex}.

\begin{table}[!t]\small
\renewcommand{\arraystretch}{1.1}
\centering
\caption{\textbf{REX versus the same RMU recipe on Safin-1-35B-A3B (\%).}
Kept share is $(W_B - W_{250\mathrm{M}})/(W_B - W_0)$,
the fraction of the removal at the start of post-training that remains
after 250M tokens, where $W$ denotes WMDP accuracy.
}
\label{tab:exp_rex}
\setlength{\tabcolsep}{4.0pt}
\begin{tabular}{llcccc}
\toprule
\textbf{Stage} & \textbf{Model}
& \textbf{WMDP-Bio} $\downarrow$
& \textbf{WMDP-Cyber} $\downarrow$
& \textbf{MMLU} $\uparrow$
& \makecell{\textbf{Kept share} $\uparrow$\\
\textbf{(Bio / Cyber)}} \\
\midrule
Original & $B$ (Safin-1)
& 82.6 & 58.3 & 82.3 & -- \\
\midrule
\multirow{2}{*}{{Base}}
& $R$
& 56.9 & 32.7 & 82.0 & -- \\
& $A_2$ (REX)
& \textbf{36.2} & \textbf{29.5} & 78.9 & -- \\
\midrule
\multirow{4}{*}{\makecell[l]{{250M}\\{post-training}}}
& $R$-SFT
& 81.7 & 47.2 & 82.4 & 3.4 / 43.6 \\
& $A_3$-direct
& 64.0 & 41.8 & 80.7 & 40.0 / 57.5 \\
& $A_3$ (REX)
& \textbf{40.8} & \textbf{32.5} & 80.2
& \textbf{90.0 / 89.9} \\
& $F_3$ (full model)
& 82.8 & 57.4 & 82.6 & -- \\
\bottomrule
\end{tabular}
\end{table}

Excising the removable experts takes the target knowledge of both domains away, while the full model retains it.
After 250M tokens of post-training, the full model $F_3$ reaches 82.8\% on WMDP-Bio and 57.4\% on WMDP-Cyber, on par with 82.6\% and 58.3\% for $B$.
Its excised counterpart $A_3$ falls to 40.8\% and 32.5\%, with Cyber close to chance level, while MMLU differs by only 2.4 points.
The same post-training therefore produces two deliverables at once: a full model for authorized users that retains the hazardous capability, and a public model that contains no parameters of the removable experts.
A single set of removable experts serves both domains, and one excision removes the target knowledge of both, so the removable experts act as a common removal unit for multiple hazardous domains.

Before post-training, the public base model $A_2$ of REX is already lower than $R$ in both domains under the same RMU recipe.
$A_2$ reaches 36.2\% on WMDP-Bio, 20.7 points below the 56.9\% of $R$, and 29.5\% on WMDP-Cyber, closer to chance level than the 32.7\% of $R$.
The two models share an identical RMU, so the difference comes from REX first relocating the target knowledge into the removable experts and excising it, which lets the cleanup start from a lower point.
$A_2$ retains over 96\% of the MMLU accuracy of $B$.

Under the same 250M tokens of standard post-training, the removal by the same RMU recipe almost entirely disappears, whereas the public model of REX stays low, which is the most direct value of REX for Safety RSI.
$R$-SFT returns to 81.7\% on WMDP-Bio, nearly the 82.6\% of $B$, while $A_3$ stays at 40.8\%, 40.9 points lower.
On WMDP-Cyber, $R$-SFT returns to 47.2\% and $A_3$ stays at 32.5\%.
The kept share measures how much of the removal at the start of post-training survives it, independent of where each route starts.
On Bio, $A_3$ keeps 90.0\% of the removal present at the start, whereas $R$-SFT keeps only 3.4\%; on Cyber, the two keep 89.9\% and 43.6\%.
$A_3$-direct starts from the same $A_2$ but is trained on the public form without recombination, and its WMDP-Bio rises to 64.0\%, a kept share of 40.0\%.
Since the three trajectories share data, budget, and order, the gap between $R$-SFT and $A_3$-direct comes from relocation and excision, and the gap between $A_3$-direct and $A_3$ comes from training on the full structure and excising at deployment.
The latter gap also directly supports the choice in \ReSI{} to train the full model rather than its excised counterpart (Section~\ref{sec:resi_arch}).
Together, the two effects keep the public model close to its pre-post-training removal level, while general knowledge continues to grow, with MMLU rising from 78.9\% for $A_2$ to 80.2\% for $A_3$.

The same delivery scheme carries over to the safety updates of \ReSI{}.
Starting from $F_3$, \ReSI{} trains the full model with all experts and red-teams, validates, and gates its excised counterpart (Section~\ref{sec:resi_arch}).
After this round, the excised model stays at 43.68\% on WMDP-Bio and 33.17\% on WMDP-Cyber, compared with 82.56\% and 58.33\% for Safin-1 (Table~\ref{tab:exp_main}).
Although \ReSI{} differs from standard post-training in both its objective and its data, the removal brought by REX carries through this round as well.
Updating on the full structure and excising at deployment can thus serve as a fixed interface for every round of Safety RSI, so that neither safety repairs nor capability gains come at the cost of hazardous knowledge returning to the public model.

\subsubsection{ReSI: Recursive Safety Improvement}
\label{sec:exp_rsi}

Following \ReSI{}~\cite{resi}, we analyze the results from Safin-1 and Safin-1+REX in terms of resistance, resilience, and retention of general capabilities and benign compliance (Table~\ref{tab:exp_a3_comparison}); Table~\ref{tab:exp_main} presents the overall comparison between Safin-1 and \Projectname{}.

\textbf{Resistance: in-distribution safety improvements and retention of existing defenses.}
Results on WildJailbreak and HarmBench illustrate improved in-distribution safety and retention of existing safety defenses, respectively, from both starting models after \ReSI{} alignment.
Safin-1 improves from 2.10\% to 0.35\% on WildJailbreak and retains its already low HarmBench ASR of 0.25\%.
Safin-1+REX improves from 2.10\% to 1.20\% on WildJailbreak, and its HarmBench ASR remains low, changing from 1.00\% to 1.25\%.

\textbf{Resilience: generalization to unseen attack mechanisms.}
Beyond in-distribution improvements, \ReSI{} substantially strengthens safety against attack mechanisms outside the red-teaming pool.
H-CoT ASRs decrease from 82.00\% and 64.00\% before alignment to 6.00\% and 14.00\% afterward for Safin-1 and Safin-1+REX, respectively.
In the overall comparison, X-Teaming ASR decreases from 64.78\% for Safin-1 to 32.08\% for \Projectname{} (Table~\ref{tab:exp_main}).

\textbf{Retention of general capabilities and benign compliance.}
The safety gains from \ReSI{} accompany broadly retained general capabilities and benign compliance, with variation across starting models.
After alignment, benign-compliance rates on XSTest decrease by 2.80 percentage points for Safin-1 and by 8.40 points for Safin-1+REX.
MMLU-Pro accuracy changes by $+2.30$ and $-2.40$ percentage points, and IFEval loose prompt-level accuracy by $-1.29$ and $-5.55$ points, for Safin-1 and Safin-1+REX, respectively.

\textbf{Comparison with the A3 baseline.}
Table~\ref{tab:exp_a3_comparison} compares A3~\cite{a3_2026}, a fixed alignment method with its own data construction and refinement, with \ReSI{} on both starting models.
On Safin-1, \ReSI{} achieves lower ASRs and better benign compliance and capability retention than A3 across the reported metrics.
On Safin-1+REX, \ReSI{} yields \Projectname{}, which matches A3 on HarmBench and H-CoT and improves WildJailbreak, XSTest, and IFEval, although MMLU-Pro accuracy is lower.
The comparison highlights the model-dependent tradeoff between safety, benign compliance, and general capabilities.

\begin{table}[!t]\small
\renewcommand{\arraystretch}{1.1}
\centering
\caption{\textbf{Comparison with A3 on both starting models (\%).}
WJB and HB denote WildJailbreak and HarmBench;
XSTest reports benign compliance, and IFEval reports loose prompt-level accuracy.
The final $+$\ReSI{} row corresponds to \Projectname{}.}
\label{tab:exp_a3_comparison}
\setlength{\tabcolsep}{4.0pt}
\begin{tabular}{lcccccc}
\toprule
\textbf{Model}
& \textbf{WJB} $\downarrow$
& \textbf{HB} $\downarrow$
& \textbf{H-CoT} $\downarrow$
& \textbf{XSTest} $\uparrow$
& \textbf{MMLU-Pro} $\uparrow$
& \textbf{IFEval} $\uparrow$ \\
\midrule
Safin-1
& 2.10 & 0.25 & 82.00 & 88.40 & 77.10 & 85.58 \\
\quad + A3
& 0.70 & 1.00 & 24.00 & 84.80 & 77.40 & 80.59 \\
\quad + \ReSI{}
& 0.35 & 0.25 & 6.00 & 85.60 & 79.40 & 84.29 \\
\midrule
Safin-1+REX
& 2.10 & 1.00 & 64.00 & 95.20 & 79.20 & 92.61 \\
\quad + A3
& 1.40 & 1.25 & 14.00 & 84.80 & 79.80 & 72.46 \\
\quad + \ReSI{}
& 1.20 & 1.25 & 14.00 & 86.80 & 76.80 & 87.06 \\
\bottomrule
\end{tabular}
\end{table}

\FloatBarrier

\section{Conclusions}
In this work, we presented \Projectname{}, a model built on a safety architecture, trained through recursive safety improvement, and equipped with an inference-time formal reasoning harness. Persistent safety memory and hazardous-knowledge disentanglement support the retention of safety knowledge and the control of hazardous capabilities, while recursive safety improvement turns newly discovered vulnerabilities into evaluated repairs. Experiments show substantial reductions in attack success rates while largely preserving general capabilities and benign compliance. The formal reasoning harness further improves reasoning accuracy by combining the model's language understanding with symbolic inference and verification, producing checkable proofs under formalized premises and accepted semantic bridges. These results demonstrate progress toward the AI-45° Law's goal of coordinated safety and capability development. In the future, we plan to examine safety retention over longer improvement trajectories and explore formal reasoning for safety decisions governed by explicit rules.







\section*{Authorship Statement}

This work was led by the Shanghai Artificial Intelligence Laboratory with contributions from the authors listed below. 

\begin{table}[htbp]
    \centering
    \renewcommand{\arraystretch}{1.35}
    \setlength{\tabcolsep}{6pt}
    \begin{tabularx}{\textwidth}{
        @{}
        *{4}{>{\centering\arraybackslash}X}
        @{}
    }
        Jingnan Zheng\textsuperscript{*} & Cheng Jin\textsuperscript{*} & Ming Zhang\textsuperscript{*} & Huan Sun\textsuperscript{*} \\

        Zhongtian Ma & Kaisen Yang  & Ziyuan Zhou & Tong Wu \\

        Dongcheng Zhang & Yi Zhang & Xiao Wang & Han Qi  \\

        Shu Yu & Ermo Hua & Zongji Yu & Chaozhuo Li \\

        Xiaoyu Wen & Zhekai Chen & Jiawei Zhou & Sirui Chen \\
        
        Ning Ding & An Zhang & Tat-Seng Chua & Xiangnan He \\
        
        Xia Hu & Bowen Zhou & Qiaosheng Zhang\textsuperscript{\ddag} & Youbang Sun\textsuperscript{\ddag} \\

        Xiang Wang\textsuperscript{\ddag} & Zhongyu Huang\textsuperscript{\ddag} & Chaochao Lu\textsuperscript{\dag} \\
    \end{tabularx}

    \smallskip
    {\small $^{*}$\textbf{Equal Contribution}\quad $^{\ddag}$\textbf{Technical Lead}\quad $^{\dag}$\textbf{Project Lead}}
\end{table}

\begingroup
\sloppy
\printbibliography[heading=bibintoc]

@string {N = "Nature"}

@string {C = "Cell"}

@string {ICML = "International Conference on Machine Learning (ICML)"}

@string {EMNLP = "Annual Conference on Empirical Methods in Natural Language Processing (EMNLP)"}

@string {CHI = "ACM Conference on Human Factors in Computing Systems (CHI)"}

@inproceedings{zhang2026darwin,
    title={Darwin G\"odel Machine: Open-Ended Evolution of Self-Improving Agents},
    author={Jenny Zhang and Shengran Hu and Cong Lu and Robert Tjarko Lange and Jeff Clune},
    booktitle={The Fourteenth International Conference on Learning Representations},
    year={2026},
    url={https://openreview.net/forum?id=pUpzQZTvGY}
}

@misc{lu2024aiscientistfullyautomated,
      title={The AI Scientist: Towards Fully Automated Open-Ended Scientific Discovery}, 
      author={Chris Lu and Cong Lu and Robert Tjarko Lange and Jakob Foerster and Jeff Clune and David Ha},
      year={2024},
      eprint={2408.06292},
      archivePrefix={arXiv},
      primaryClass={cs.AI},
      url={https://arxiv.org/abs/2408.06292}, 
}

@misc{gpt56,
      title={{GPT}-5.6 System Card}, 
      author={OpenAI},
      year={2026},
      howpublished={OpenAI Deployment Safety Hub},
      url={https://deploymentsafety.openai.com/gpt-5-6},
      note={Accessed 20 September 2026}
}

@misc{claude4,
      title={Claude 4 System Card}, 
      author={Anthropic},
      year={2025},
      url={https://www-cdn.anthropic.com/6d8a8055020700718b0c49369f60816ba2a7c285.pdf}, 
}

@inproceedings{harmbench,
    author = {Mazeika, Mantas and Phan, Long and Yin, Xuwang and Zou, Andy and Wang, Zifan and Mu, Norman and Sakhaee, Elham and Li, Nathaniel and Basart, Steven and Li, Bo and Forsyth, David and Hendrycks, Dan},
    title = {HarmBench: a standardized evaluation framework for automated red teaming and robust refusal},
    year = {2024},
    publisher = {JMLR.org},
    booktitle = {Proceedings of the 41st International Conference on Machine Learning},
    articleno = {1431},
    numpages = {44},
    location = {Vienna, Austria},
    series = {ICML'24}
}

@inproceedings{qi2025safety,
    title={Safety Alignment Should be Made More Than Just a Few Tokens Deep},
    author={Xiangyu Qi and Ashwinee Panda and Kaifeng Lyu and Xiao Ma and Subhrajit Roy and Ahmad Beirami and Prateek Mittal and Peter Henderson},
    booktitle={The Thirteenth International Conference on Learning Representations},
    year={2025},
    url={https://openreview.net/forum?id=6Mxhg9PtDE}
}

@InProceedings{wmdp,
    title = 	 {The {WMDP} Benchmark: Measuring and Reducing Malicious Use with Unlearning},
    author =       {Li, Nathaniel and Pan, Alexander and Gopal, Anjali and Yue, Summer and Berrios, Daniel and Gatti, Alice and Li, Justin D. and Dombrowski, Ann-Kathrin and Goel, Shashwat and Mukobi, Gabriel and Helm-Burger, Nathan and Lababidi, Rassin and Justen, Lennart and Liu, Andrew Bo and Chen, Michael and Barrass, Isabelle and Zhang, Oliver and Zhu, Xiaoyuan and Tamirisa, Rishub and Bharathi, Bhrugu and Herbert-Voss, Ariel and Breuer, Cort B and Zou, Andy and Mazeika, Mantas and Wang, Zifan and Oswal, Palash and Lin, Weiran and Hunt, Adam Alfred and Tienken-Harder, Justin and Shih, Kevin Y. and Talley, Kemper and Guan, John and Steneker, Ian and Campbell, David and Jokubaitis, Brad and Basart, Steven and Fitz, Stephen and Kumaraguru, Ponnurangam and Karmakar, Kallol Krishna and Tupakula, Uday and Varadharajan, Vijay and Shoshitaishvili, Yan and Ba, Jimmy and Esvelt, Kevin M. and Wang, Alexandr and Hendrycks, Dan},
    booktitle = 	 {Proceedings of the 41st International Conference on Machine Learning},
    pages = 	 {28525--28550},
    year = 	 {2024},
    editor = 	 {Salakhutdinov, Ruslan and Kolter, Zico and Heller, Katherine and Weller, Adrian and Oliver, Nuria and Scarlett, Jonathan and Berkenkamp, Felix},
    volume = 	 {235},
    series = 	 {Proceedings of Machine Learning Research},
    month = 	 jul,
    publisher =    {PMLR},
    url = 	 {https://proceedings.mlr.press/v235/li24bc.html}
}

@inproceedings{zhang2024negative,
    title={Negative Preference Optimization: From Catastrophic Collapse to Effective Unlearning},
    author={Ruiqi Zhang and Licong Lin and Yu Bai and Song Mei},
    booktitle={First Conference on Language Modeling},
    year={2024},
    url={https://openreview.net/forum?id=MXLBXjQkmb}
}

@article{yao2024survey,
  title={A survey on large language model (llm) security and privacy: The good, the bad, and the ugly},
  author={Yao, Yifan and Duan, Jinhao and Xu, Kaidi and Cai, Yuanfang and Sun, Zhibo and Zhang, Yue},
  journal={High-Confidence Computing},
  volume={4},
  number={2},
  pages={100211},
  year={2024},
  publisher={Elsevier}
}

@misc{safeworkr1,
      title={SafeWork-R1: Coevolving Safety and Intelligence under the AI-45$^{\circ}$ Law}, 
      author={Shanghai AI Lab and : and Yicheng Bao and Guanxu Chen and Mingkang Chen and Yunhao Chen and Chiyu Chen and Lingjie Chen and Sirui Chen and Xinquan Chen and Jie Cheng and Yu Cheng and Dengke Deng and Yizhuo Ding and Dan Ding and Xiaoshan Ding and Yi Ding and Zhichen Dong and Lingxiao Du and Yuyu Fan and Xinshun Feng and Yanwei Fu and Yuxuan Gao and Ruijun Ge and Tianle Gu and Lujun Gui and Jiaxuan Guo and Qianxi He and Yuenan Hou and Xuhao Hu and Hong Huang and Kaichen Huang and Shiyang Huang and Yuxian Jiang and Shanzhe Lei and Jie Li and Lijun Li and Hao Li and Juncheng Li and Xiangtian Li and Yafu Li and Lingyu Li and Xueyan Li and Haotian Liang and Dongrui Liu and Qihua Liu and Zhixuan Liu and Bangwei Liu and Huacan Liu and Yuexiao Liu and Zongkai Liu and Chaochao Lu and Yudong Lu and Xiaoya Lu and Zhenghao Lu and Qitan Lv and Caoyuan Ma and Jiachen Ma and Xiaoya Ma and Zhongtian Ma and Lingyu Meng and Ziqi Miao and Yazhe Niu and Yuezhang Peng and Yuan Pu and Han Qi and Chen Qian and Xingge Qiao and Jingjing Qu and Jiashu Qu and Wanying Qu and Wenwen Qu and Xiaoye Qu and Qihan Ren and Qingnan Ren and Qingyu Ren and Jing Shao and Wenqi Shao and Shuai Shao and Dongxing Shi and Xin Song and Xinhao Song and Yan Teng and Xuan Tong and Yingchun Wang and Xuhong Wang and Shujie Wang and Xin Wang and Yige Wang and Yixu Wang and Yuanfu Wang and Futing Wang and Ruofan Wang and Wenjie Wang and Yajie Wang and Muhao Wei and Xiaoyu Wen and Fenghua Weng and Yuqi Wu and Yingtong Xiong and Xingcheng Xu and Chao Yang and Yue Yang and Yang Yao and Yulei Ye and Zhenyun Yin and Yi Yu and Bo Zhang and Qiaosheng Zhang and Jinxuan Zhang and Yexin Zhang and Yinqiang Zheng and Hefeng Zhou and Zhanhui Zhou and Pengyu Zhu and Qingzi Zhu and Yubo Zhu and Bowen Zhou},
      year={2025},
      eprint={2507.18576},
      archivePrefix={arXiv},
      primaryClass={cs.AI},
      url={https://arxiv.org/abs/2507.18576}, 
}

@article{chen2026recursive,
  title={Recursive Self-Improvement in AI: From Bounded Self-Refinement to Autonomous Research Loops},
  author={Chen, Mingguang and Wang, Licheng and Qu, Bo},
  journal={arXiv preprint arXiv:2607.07663},
  year={2026}
}

@article{ibrahimouglu2026autonomous,
  title={Autonomous AI agents and recursive self-improvement: a comprehensive survey of contemporary architectures, capabilities, and safety frameworks},
  author={{\.I}brahimo{\u{g}}lu, Nadir and Demirezen, Mustafa Umut and Y{\i}ld{\i}z, Furkan},
  journal={Cluster Computing},
  volume={29},
  number={13},
  pages={766},
  year={2026},
  publisher={Springer}
}

@article{gong2026evolutionary,
  title={Evolutionary Safety of Recursive Self-Improving AI: Taxonomy, Risk Discovery, and Evaluation},
  author={Gong, Chang and Bi, Jingping and Yao, Di and Liang, Xinjian and Xiang, Chao and Guo, Ruijie},
  journal={arXiv preprint arXiv:2609.31186},
  year={2026}
}

@article{zhang2026safety,
  title={Safety Must Survive Self-Improvement: Why Failures Persist and How Agents Recover},
  author={Zhang, Yunbei and Wang, Janet and Lyu, Saiyue and Ge, Yingqiang and Liang, Kaiqu and Jin, Zijian and Reddy, Chandan K and Hamm, Jihun},
  journal={arXiv preprint arXiv:2610.01073},
  year={2026}
}

@inproceedings{qi2024fine,
  title={Fine-tuning aligned language models compromises safety, even when users do not intend to!},
  author={Qi, Xiangyu and Zeng, Yi and Xie, Tinghao and Chen, Pin-Yu and Jia, Ruoxi and Mittal, Prateek and Henderson, Peter},
  booktitle={International Conference on Learning Representations},
  volume={2024},
  pages={30988--31043},
  year={2024}
}

@article{doshi2024does,
  title={Does unlearning truly unlearn? a black box evaluation of llm unlearning methods},
  author={Doshi, Jai and Stickland, Asa Cooper},
  journal={arXiv preprint arXiv:2411.12103},
  year={2024}
}

@article{bol1991analysis,
  title = {An analysis of loop checking mechanisms for logic programs},
  author = {Bol, Roland N and Apt, Krzysztof R and Klop, Jan Willem},
  journal = {Theoretical Computer Science},
  volume = {86},
  number = {1},
  pages = {35--79},
  year = {1991}
}

@inproceedings{lincoln1994proof,
  title = {Proof search in first-order linear logic and other cut-free sequent calculi},
  author = {Lincoln, PD and Shankar, Natarajan},
  booktitle = {Proceedings of the 9th Annual Symposium on Logic in Computer Science},
  pages = {282--291},
  year = {1994}
}

@book{josephson1996abductive,
  title = {Abductive inference: Computation, philosophy, technology},
  author = {Josephson, John R and Josephson, Susan G},
  year = {1996},
  publisher = {Cambridge University Press}
}

@article{cardelli1996type,
  title = {Type systems},
  author = {Cardelli, Luca},
  journal = {ACM Computing Surveys},
  volume = {28},
  number = {1},
  pages = {263--264},
  year = {1996}
}

@inproceedings{han2024folio,
  title = {{FOLIO}: Natural language reasoning with first-order logic},
  author = {Han, Simeng and Schoelkopf, Hailey and Zhao, Yilun and Qi, Zhenting and Riddell, Martin and Zhou, Wenfei and Coady, James and Peng, David and Qiao, Yujie and Benson, Luke and others},
  booktitle = {Proceedings of the 2024 Conference on Empirical Methods in Natural Language Processing},
  pages = {22017--22031},
  year = {2024}
}

@inproceedings{tafjord2021proofwriter,
  title = {{ProofWriter}: Generating implications, proofs, and abductive statements over natural language},
  author = {Tafjord, Oyvind and Dalvi, Bhavana and Clark, Peter},
  booktitle = {Findings of the Association for Computational Linguistics: ACL-IJCNLP 2021},
  pages = {3621--3634},
  year = {2021}
}

@inproceedings{saparov2023language,
  title = {Language models are greedy reasoners: A systematic formal analysis of chain-of-thought},
  author = {Saparov, Abulhair and He, He},
  booktitle = {Proceedings of the 11th International Conference on Learning Representations},
  year = {2023}
}

@inproceedings{qi2025large,
  title = {Large language models meet symbolic provers for logical reasoning evaluation},
  author = {Qi, Chengwen and Ma, Ren and Li, Bowen and Hui, Binyuan and Wu, Jinwang and Laili, Yuanjun and He, Conghui and others},
  booktitle = {Proceedings of the 13th International Conference on Learning Representations},
  pages = {54186--54205},
  year = {2025}
}

@misc{safin1,
      title={Safin-1: Safety from Within through Memory-Native State Evolution}, 
      author={Ming Zhang and Kaisen Yang and Shu Yu and Ermo Hua and Zhekai Chen and Cheng Jin and Jingnan Zheng and Yi Zhang and Zhongtian Ma and Jiawei Zhou and Sirui Chen and Qiaosheng Zhang and Xiang Wang and Ning Ding and Xia Hu and Bowen Zhou and Youbang Sun and Chaochao Lu},
      year={2026},
      eprint={2609.00092},
      archivePrefix={arXiv},
      primaryClass={cs.LG},
      url={https://arxiv.org/abs/2609.00092}, 
}

@misc{googledeepmind2025gemini3flash,
  title = {{Gemini 3 Flash - Model} card},
  author = {{Google DeepMind}},
  url = {https://deepmind.google/models/model-cards/gemini-3-flash/},
  year = {2025}
}

@article{liu2025deepseek,
  title = {{DeepSeek-V3.2}: Pushing the frontier of open large language models},
  author = {Liu, Aixin and Mei, Aoxue and Lin, Bangcai and Xue, Bing and Wang, Bingxuan and Xu, Bingzheng and Wu, Bochao and Zhang, Bowei and Lin, Chaofan and Dong, Chen and others},
  journal = {arXiv preprint arXiv:2512.02556},
  year = {2025}
}

@article{qiu2026design,
  title = {On the design of {Qwen3.8-Next} architecture: Evaluation, efficiency, and training stability},
  author = {Qiu, Zihan and Wang, Zekun and Li, Xiao and Li, Yanpeng and Xu, Yang and Wang, Yixuan and Zhang, Huaqing and Men, Rui and Mao, Bochao and Zhang, Chengruidong and others},
  journal = {arXiv preprint arXiv:2608.30320},
  year = {2026}
}

@misc{openai2026gpt55,
  title = {{GPT-5.5} system card},
  author = {{OpenAI}},
  url = {https://deploymentsafety.openai.com/gpt-5-5},
  year = {2026}
}

@inproceedings{reimers2019sentence,
  title = {{Sentence-BERT}: Sentence embeddings using siamese {BERT}-networks},
  author = {Reimers, Nils and Gurevych, Iryna},
  booktitle = {Proceedings of the 2019 Conference on Empirical Methods in Natural Language Processing and the 9th International Joint Conference on Natural Language Processing},
  pages = {3982--3992},
  year = {2019}
}

@article{liu2019roberta,
  title = {{RoBERTa}: A robustly optimized {BERT} pretraining approach},
  author = {Liu, Yinhan and Ott, Myle and Goyal, Naman and Du, Jingfei and Joshi, Mandar and Chen, Danqi and Levy, Omer and Lewis, Mike and Zettlemoyer, Luke and Stoyanov, Veselin},
  journal = {arXiv preprint arXiv:1907.11692},
  year = {2019}
}

@inproceedings{sunposition,
  title={Position: Safe {AI} Should be Resistant and Resilient in an Evolving World},
  author={Sun, Youbang and Wang, Xiang and Fu, Jie and Lu, Chaochao and Zhou, Bowen},
  booktitle={Forty-third International Conference on Machine Learning Position Paper Track},
  year={2026},
  url={https://openreview.net/forum?id=ApUb0ToKMN},
  eprint={2509.06786},
  archiveprefix={arXiv},
  primaryclass={cs.LG}
}

@article{wang2026lifelong,
  title={Lifelong safety alignment for language models},
  author={Wang, Haoyu and Zhao, Yifei and Qin, Zeyu and Du, Chao and Lin, Min and Wang, Xueqian and Pang, Tianyu},
  journal={Advances in Neural Information Processing Systems},
  volume={38},
  pages={19908--19938},
  year={2026}
}

@misc{GPT-Red,
      title={GPT-Red: Automated Red Teaming via Self-Play at Scale}, 
      author={Eric Wallace and Christopher A. Choquette-Choo and Nikhil Kandpal and Sam Toyer and Dylan Hunn and Stephanie Lin and Yuxin Wen and Xiangyu Qi and Christopher Wolff and Zizhao Wang and Milad Nasr and Sicheng Zhu and Chuan Guo and Juan Felipe Cerón Uribe and Kaiwen Wang and Aiden Low and Kai Xiao and Kai Chen},
      year={2026},
      eprint={2607.26115},
      archivePrefix={arXiv},
      primaryClass={cs.CR},
      url={https://arxiv.org/abs/2607.26115}, 
}

@article{10.1145/1592434.1592436,
    author = {Woodcock, Jim and Larsen, Peter Gorm and Bicarregui, Juan and Fitzgerald, John},
    title = {Formal methods: Practice and experience},
    year = {2009},
    issue_date = {October 2009},
    publisher = {Association for Computing Machinery},
    address = {New York, NY, USA},
    volume = {41},
    number = {4},
    issn = {0360-0300},
    url = {https://doi.org/10.1145/1592434.1592436},
    doi = {10.1145/1592434.1592436},
    journal = {ACM Comput. Surv.},
    month = oct,
    articleno = {19},
    numpages = {36}
}

@inproceedings{pan-etal-2023-logic,
    title = "Logic-{LM}: Empowering Large Language Models with Symbolic Solvers for Faithful Logical Reasoning",
    author = "Pan, Liangming  and
      Albalak, Alon  and
      Wang, Xinyi  and
      Wang, William",
    editor = "Bouamor, Houda  and
      Pino, Juan  and
      Bali, Kalika",
    booktitle = "Findings of the {A}ssociation for {C}omputational {L}inguistics: {EMNLP} 2023",
    month = dec,
    year = "2023",
    address = "Singapore",
    publisher = "Association for Computational Linguistics",
    url = "https://aclanthology.org/2023.findings-emnlp.248/",
    doi = "10.18653/v1/2023.findings-emnlp.248",
    pages = "3806--3824"
}

@inproceedings{
    jiang2023draft,
    title={Draft, Sketch, and Prove: Guiding Formal Theorem Provers with Informal Proofs},
    author={Albert Qiaochu Jiang and Sean Welleck and Jin Peng Zhou and Timothee Lacroix and Jiacheng Liu and Wenda Li and Mateja Jamnik and Guillaume Lample and Yuhuai Wu},
    booktitle={The Eleventh International Conference on Learning Representations },
    year={2023},
    url={https://openreview.net/forum?id=SMa9EAovKMC}
}

@misc{resi,
      title={ReSI: Recursive Safety Improvement toward Resistant and Resilient AI}, 
      author={Jingnan Zheng and Dongcheng Zhang and Yi Zhang and Ming Zhang and Qiaosheng Zhang and Youbang Sun and An Zhang and Xiangnan He and Tat-Seng Chua and Xia Hu and Bowen Zhou and Chaochao Lu and Xiang Wang},
      year={2026},
      eprint={2610.12233},
      archivePrefix={arXiv},
      primaryClass={cs.CR},
      url={https://arxiv.org/abs/2610.12233}, 
}

@misc{sun2026pogb,
  title     = {Let Solvers Reason and Models Understand: Proof-Obligation-Guided Bridging},
  author    = {Sun, Huan and Huang, Zhongyu and Li, Chaozhuo and Chen, Zhekai and Sun, Youbang and Lu, Chaochao and Hu, Xia},
  year      = {2026},
  month     = {October},
  publisher = {Zenodo},
  doi       = {10.5281/zenodo.23527696},
  url       = {https://doi.org/10.5281/zenodo.23527696},
  note      = {Preprint}
}

@article{wu2025coga,
  title={CoGA: A Collaborative Gray-Box Adversarial Attack for Multimodal Language Models},
  author={Wu, Tong and Lin, Feng and Wang, Gaojian and Liu, Tiantian and Wang, Zhibo and Meng, Weizhi and Liu, Ajian and Ren, Kui},
  journal={IEEE Transactions on Information Forensics and Security},
  volume={21},
  pages={870--885},
  year={2025},
  publisher={IEEE}
}

@inproceedings{
zhou2025robust,
title={Robust Multi-Agent Reinforcement Learning with Stochastic Adversary},
author={Ziyuan Zhou and Guanjun Liu and Mengchu Zhou and Weiran Guo},
booktitle={Forty-second International Conference on Machine Learning},
year={2025},
url={https://openreview.net/forum?id=bnhFueOeav}
}

@article{duan2026last,
  title={The last AI built by humans: Toward genuine recursive self-improvement},
  author={Duan, Yi and Liu, Ying and Tang, Zirui and Chen, Haodong and Zhou, Jun and Liu, Yumou and Xu, Bangrui and Wu, Yukai and Chen, Sidi and Zhou, Yuhan and others},
  journal={arXiv preprint arXiv:2609.11873},
  year={2026}
}

@article{zheng2026dream,
  title={Dream-rsi: Recursive self-improvement through evolving worlds},
  author={Zheng, Tong and Wu, Xidong and Zhang, Zheng and He, Zhankui and Zhang, Chaoyi and Coleman, Benjamin and Wei, Ruoqiao and Bai, Di and Liu, Haolin and Liu, Rui and others},
  journal={arXiv preprint arXiv:2609.14858},
  year={2026}
}

@article{jiang2026survey,
  title={A survey on large language models for code generation},
  author={Jiang, Juyong and Wang, Fan and Shen, Jiasi and Kim, Sungju and Kim, Sunghun},
  journal={ACM Transactions on Software Engineering and Methodology},
  volume={35},
  number={2},
  pages={1--72},
  year={2026},
  publisher={ACM New York, NY}
}

@article{minaee2024large,
  title={Large language models: A survey},
  author={Minaee, Shervin and Mikolov, Tomas and Nikzad, Narjes and Chenaghlu, Meysam and Socher, Richard and Amatriain, Xavier and Gao, Jianfeng},
  journal={arXiv preprint arXiv:2402.06196},
  year={2024}
}

@article{zhang2026march,
  title={MARCH: Scaling Recurrent Memory with Content-Routed State Anchors},
  author={Zhang, Ming and Yang, Kaisen and Yu, Shu and Hua, Ermo and Ding, Ning and Hu, Xia and Zhou, Bowen and Lu, Chaochao and Sun, Youbang},
  journal={arXiv preprint arXiv:2608.12435},
  year={2026}
}

@misc{jin2026architecturallevelsafetycontrolrelocating,
      title={Towards Architectural-Level Safety Control: Relocating Hazardous Knowledge into Removable Experts in Mixture-of-Experts Models}, 
      author={Cheng Jin and Zhongtian Ma and Xiao Wang and Han Qi and Xiaoyu Wen and Xiang Wang and Youbang Sun and Xia Hu and Bowen Zhou and Chaochao Lu and Qiaosheng Zhang},
      year={2026},
      eprint={2610.25791},
      archivePrefix={arXiv},
      primaryClass={cs.CR},
      url={https://arxiv.org/abs/2610.25791}, 
}

@misc{wildteaming2024,
  title = {{WildTeaming at Scale: From In-the-Wild Jailbreaks to (Adversarially) Safer Language Models}},
  author = {Jiang, Liwei and Rao, Kavel and Han, Seungju and Ettinger, Allyson and Brahman, Faeze and Kumar, Sachin and Mireshghallah, Niloofar and Lu, Ximing and Sap, Maarten and Choi, Yejin and Dziri, Nouha},
  year = {2024},
  eprint = {2406.18510},
  archivePrefix = {arXiv},
  url = {https://arxiv.org/abs/2406.18510},
}

@misc{harmbench2024,
  title = {{HarmBench: A Standardized Evaluation Framework for Automated Red Teaming and Robust Refusal}},
  author = {Mazeika, Mantas and Phan, Long and Yin, Xuwang and Zou, Andy and Wang, Zifan and Mu, Norman and Sakhaee, Elham and Li, Nathaniel and Basart, Steven and Li, Bo and Forsyth, David and Hendrycks, Dan},
  year = {2024},
  eprint = {2402.04249},
  archivePrefix = {arXiv},
  url = {https://arxiv.org/abs/2402.04249},
}

@misc{hcot2025,
  title = {{H-CoT: Hijacking the Chain-of-Thought Safety Reasoning Mechanism to Jailbreak Large Reasoning Models, Including OpenAI o1/o3, DeepSeek-R1, and Gemini 2.0 Flash Thinking}},
  author = {Kuo, Martin and Zhang, Jianyi and Ding, Aolin and Wang, Qinsi and DiValentin, Louis and Bao, Yujia and Wei, Wei and Li, Hai and Chen, Yiran},
  year = {2025},
  eprint = {2502.12893},
  archivePrefix = {arXiv},
  url = {https://arxiv.org/abs/2502.12893},
}

@misc{xteaming2025,
  title = {{X-Teaming: Multi-Turn Jailbreaks and Defenses with Adaptive Multi-Agents}},
  author = {Rahman, Salman and Jiang, Liwei and Shiffer, James and Liu, Genglin and Issaka, Sheriff and Parvez, Md Rizwan and Palangi, Hamid and Chang, Kai-Wei and Choi, Yejin and Gabriel, Saadia},
  year = {2025},
  eprint = {2504.13203},
  archivePrefix = {arXiv},
  url = {https://arxiv.org/abs/2504.13203},
}

@misc{xstest2023,
  title = {{XSTest: A Test Suite for Identifying Exaggerated Safety Behaviours in Large Language Models}},
  author = {R{\"{o}}ttger, Paul and Kirk, Hannah Rose and Vidgen, Bertie and Attanasio, Giuseppe and Bianchi, Federico and Hovy, Dirk},
  year = {2023},
  eprint = {2308.01263},
  archivePrefix = {arXiv},
  url = {https://arxiv.org/abs/2308.01263},
}

@misc{mmlupro2024,
  title = {{MMLU-Pro: A More Robust and Challenging Multi-Task Language Understanding Benchmark}},
  author = {Wang, Yubo and Ma, Xueguang and Zhang, Ge and Ni, Yuansheng and Chandra, Abhranil and Guo, Shiguang and Ren, Weiming and Arulraj, Aaran and He, Xuan and Jiang, Ziyan and Li, Tianle and Ku, Max and Wang, Kai and Zhuang, Alex and Fan, Rongqi and Yue, Xiang and Chen, Wenhu},
  year = {2024},
  eprint = {2406.01574},
  archivePrefix = {arXiv},
  url = {https://arxiv.org/abs/2406.01574},
}

@misc{ifeval2023,
  title = {{Instruction-Following Evaluation for Large Language Models}},
  author = {Zhou, Jeffrey and Lu, Tianjian and Mishra, Swaroop and Brahma, Siddhartha and Basu, Sujoy and Luan, Yi and Zhou, Denny and Hou, Le},
  year = {2023},
  eprint = {2311.07911},
  archivePrefix = {arXiv},
  url = {https://arxiv.org/abs/2311.07911},
}

@misc{darwin2026,
  title = {{DARWIN: Evolving Jailbreak Adversary and Guardrail for LLM Safety Evaluation and Protection}},
  author = {Qi, Weiwei and Wu, Zefeng and Guo, Zhilin and Zheng, Tianhang and Lu, Chaochao and He, Liang and Qin, Zhan and Ren, Kui},
  year = {2026},
  eprint = {2607.19829},
  archivePrefix = {arXiv},
  url = {https://arxiv.org/abs/2607.19829},
}

@misc{jailbreakskill2026,
  title = {{JailbreakSkill: Scaling Automated Red-Teaming with Reusable and Ever-Evolving Skills}},
  author = {Wen, Xiaoyu and Li, Jiajia and He, Zhida and Yu, Peng and Wang, Chenxu and Qi, Han and Zhou, Ziyuan and Jin, Cheng and Wen, Ying and Xu, Xingcheng and Hu, Shuyue and Zheng, Tianhang and Lu, Chaochao and Zhang, Qiaosheng},
  year = {2026},
  eprint = {2608.16465},
  archivePrefix = {arXiv},
  url = {https://arxiv.org/abs/2608.16465},
}

@misc{a3_2026,
  title = {{A3: An Automated Alignment Agent for Safety Finetuning}},
  author = {Zhang, Jifan and Sleight, Henry and Benton, Joe},
  year = {2026},
  howpublished = {Alignment Science Blog},
  url = {https://alignment.anthropic.com/2026/automated-alignment-agent/},
  note = {Accessed 15 September 2026},
}

@misc{sinternal2026,
  title = {{Internalizing Safety Understanding in Large Reasoning Models via Verification}},
  author = {Zhang, Yi and Chen, Yuxin and Sheng, Leheng and Zhang, Dongcheng and Lu, Chaochao and Wang, Xiang and Zhang, An},
  year = {2026},
  eprint = {2605.08930},
  archivePrefix = {arXiv},
  url = {https://arxiv.org/abs/2605.08930},
}

@misc{alphaalign2025,
  title         = {{AlphaAlign}: Incentivizing Safety Alignment with Extremely Simplified Reinforcement Learning},
  author        = {Yi Zhang and An Zhang and XiuYu Zhang and Leheng Sheng and Yuxin Chen and Zhenkai Liang and Xiang Wang},
  year          = {2025},
  eprint        = {2507.14987},
  archivePrefix = {arXiv},
  primaryClass  = {cs.AI},
  url           = {https://arxiv.org/abs/2507.14987}
}

@article{yu2025dapo,
  title   = {{DAPO}: An Open-Source {LLM} Reinforcement Learning System at Scale},
  author  = {Yu, Qiying and Zhang, Zheng and Zhu, Ruofei and Yuan, Yufeng
             and Zuo, Xiaochen and Yue, Yu and Dai, Weinan and Fan, Tiantian
             and Liu, Gaohong and Liu, Lingjun and Liu, Xin and Lin, Haibin
             and Lin, Zhiqi and Ma, Bole and Sheng, Guangming and Tong, Yuxuan
             and Zhang, Chi and Zhang, Mofan and Zhang, Wang and Zhu, Hang
             and Zhu, Jinhua and Chen, Jiaze and Chen, Jiangjie and Wang, Chengyi
             and Yu, Hongli and Song, Yuxuan and Wei, Xiangpeng and Zhou, Hao
             and Liu, Jingjing and Ma, Wei-Ying and Zhang, Ya-Qin and Yan, Lin
             and Qiao, Mu and Wu, Yonghui and Wang, Mingxuan},
  journal = {arXiv preprint arXiv:2503.14476},
  year    = {2025},
  doi     = {10.48550/arXiv.2503.14476},
  url     = {https://arxiv.org/abs/2503.14476}
}

@misc{lora2021,
  title = {{LoRA: Low-Rank Adaptation of Large Language Models}},
  author = {Hu, Edward J. and Shen, Yelong and Wallis, Phillip and Allen-Zhu, Zeyuan and Li, Yuanzhi and Wang, Shean and Wang, Lu and Chen, Weizhu},
  year = {2021},
  eprint = {2106.09685},
  archivePrefix = {arXiv},
  url = {https://arxiv.org/abs/2106.09685},
}

@misc{r2ai2025,
  title = {{R$^2$AI: Towards Resistant and Resilient AI in an Evolving World}},
  author = {Sun, Youbang and Wang, Xiang and Fu, Jie and Lu, Chaochao and Zhou, Bowen},
  year = {2025},
  eprint = {2509.06786},
  archivePrefix = {arXiv},
  url = {https://arxiv.org/abs/2509.06786}
}

@misc{claudeopus5systemcard2026,
  title = {{System Card: Claude Opus 5}},
  author = {{Anthropic}},
  year = {2026},
  howpublished = {System card},
  url = {https://www.anthropic.com/claude-opus-5-system-card},
  note = {Accessed 20 September 2026}
}

@misc{kimik3report2026,
  title = {{Kimi K3: Open Frontier Intelligence}},
  author = {{Moonshot AI}},
  year = {2026},
  howpublished = {Technical report},
  url = {https://github.com/MoonshotAI/Kimi-K3/blob/main/k3_tech_report.pdf}
}

@misc{glm53model2026,
  title = {{GLM-5.3}},
  author = {{Z.ai}},
  year = {2026},
  howpublished = {Official model card},
  url = {https://huggingface.co/zai-org/GLM-5.3},
  note = {Accessed 20 September 2026}
}

@misc{deepseekmath2024,
  title = {{DeepSeekMath: Pushing the Limits of Mathematical Reasoning in Open Language Models}},
  author = {Shao, Zhihong and Wang, Peiyi and Zhu, Qihao and Xu, Runxin and Song, Junxiao and Bi, Xiao and Zhang, Haowei and Zhang, Mingchuan and Li, Y. K. and Wu, Y. and Guo, Daya},
  year = {2024},
  eprint = {2402.03300},
  archivePrefix = {arXiv},
  url = {https://arxiv.org/abs/2402.03300},
}
\endgroup

\clearpage
\appendix
\section{Appendix}

\subsection{Running Example}
\label{app:exam}

In this section, we illustrate our proposed method through a simplified example from the FOLIO dataset~\citep{han2024folio}.
The NL premise set $\symbPre = \{\symbpre_1; \symbpre_2; \symbpre_3\}$ states: “Reading a book gives a person knowledge; gaining knowledge makes that person smarter; and Harry read \emph{Walden}.”
The NL query $\symbhyp$ asks whether the statement “Harry is smarter than before” can be concluded from these premises.
Let the constants $h$, $w$, and $k$ denote Harry, \emph{Walden}, and knowledge, respectively.
We first perform a grounding autoformalization ${\symbAutofm}$ to map $\symbqa = (\symbPre,\symbhyp)$ to
\begin{equation*}
\begin{aligned}
\symbPF = \{\;
    &\symbpf_1:\ \forall x_1\forall y.\ 
        \bigl(\mathsf{Book}(y)\land\mathsf{Read}(x_1,y)\bigr)
        \rightarrow\mathsf{Gain}(x_1,k),\\
    &\symbpf_2:\ \forall x_2.\ 
        \mathsf{Gain}(x_2,k)\rightarrow\mathsf{Smarter}(x_2),\\
    &\symbpf_3:\ \mathsf{Book}(w)\land\mathsf{Read}(h,w)\;\},
\qquad
\symbHF = \mathsf{SmarterThanBefore}(h),
\end{aligned}
\label{eq:running-example}
\end{equation*}
and the grounding table~$\symbSymTab$ shown in Table~\ref{tab:gt}.
During proof-time inference, the solver can therefore access all entries in $\symbSymTab$.
In particular, type information $\operatorname{type}_{\symbSymTab}(\symbterm)$ constrains which instantiations are admissible, whereas NL information $\operatorname{NL}_{\symbSymTab}(\symbterm)$ guides the search for semantic bridges when exact symbolic matching fails.

\begin{table}[H]\small
    \renewcommand{\arraystretch}{1.25}
    \centering
    \caption{The grounding table $\symbSymTab$ of the running example.}
    \label{tab:gt}
    \begin{tabular}{lll}
    \toprule
    \makecell[l]{\textbf{Formal element}\\$\symbterm$}
    & \makecell[l]{\textbf{Type information}\\$\operatorname{type}_{\symbSymTab}(\symbterm)$}
    & \makecell[l]{\textbf{Natural-language information}\\$\operatorname{NL}_{\symbSymTab}(\symbterm)$} \\
    \midrule
    $h$     & $\mathsf{Person}$ & “Harry”\\
    $w$     & $\mathsf{Book}$   & “\emph{Walden}”\\
    $k$     & $\mathsf{Object}$ & “knowledge”\\
    $x_1$   & $\mathsf{Person}$ & “a person”\\
    $x_2$   & $\mathsf{Person}$ & “a person”\\
    $y$     & $\mathsf{Book}$   & “a book”\\
    $\mathsf{Book}(\{a_1\})$
            & $\mathsf{Book}$
            & “$\{a_1\}$ is a book”\\
    $\mathsf{Read}(\{a_1\},\{a_2\})$
            & $\mathsf{Person}, \mathsf{Book}$
            & “$\{a_1\}$ reads $\{a_2\}$”\\
    $\mathsf{Gain}(\{a_1\},k)$
            & $\mathsf{Person}$
            & “$\{a_1\}$ gains knowledge”\\
    $\mathsf{Smarter}(\{a_1\})$
            & $\mathsf{Person}$ & “$\{a_1\}$ is smarter”\\
    $\mathsf{SmarterThanBefore}(\{a_1\})$
            & $\mathsf{Person}$ & “$\{a_1\}$ is smarter than before”\\
    \bottomrule
    \end{tabular}
\end{table}

After completing the grounding autoformalization process described above, we perform the proof-obligation-guided search.
Initialize the proof obligation as $\phi = \symbHF = \mathsf{SmarterThanBefore}(h)$.
Algorithm~\ref{alg:grounding-retrieval} instantiates $\symbpf_1$ and $\symbpf_2$ using the substitutions $x_1 \mapsto h$, $x_2 \mapsto h$, and $y \mapsto w$, yielding $\symbinst_1$ and $\symbinst_2$.
Type compatibility permits these substitutions while ruling out alternatives such as $x_1 \mapsto w$, $x_1 \mapsto k$, $x_2 \mapsto w$, $x_2 \mapsto k$, $y \mapsto h$, and $y \mapsto k$.
Algorithm~\ref{alg:grounding-retrieval} considers ${\symbInst} =\{\symbinst_2\}$ and obtains
\begin{equation*}
    \Call{Lit}{\symbinst_2} = \{ \neg \mathsf{Gain}(h,k), \mathsf{Smarter}(h) \},
\end{equation*}
It then selects ${\mathbb{S}_{\mathrm{lit}}} = \{\mathsf{Smarter}(h)\} \subseteq \Call{Lit}{\symbinst_2}$, and constructs the candidate semantic support $\Sigma = \mathsf{Smarter}(h)$ and the candidate semantic bridge
\begin{align*}
    \symbbridge
    &= \Sigma \rightarrow \phi \\
    &= \mathsf{Smarter}(h) \rightarrow \mathsf{SmarterThanBefore}(h).
\end{align*}
Logical abduction then derives the residual obligation $\symbobligation = \mathsf{Gain}(h,k)$.
Once the LLM-based $\operatorname{Gate}$ accepts $\symbbridge$ based on its NL interpretation that “Harry is smarter” entails “Harry is smarter than before,” Algorithm~\ref{alg:direct-search} proceeds to discharge the residual obligations:
\begin{align*}
    \mathsf{SmarterThanBefore}(h)
    &\xrightarrow[\symbbridge]{\symbpf_2[x_2 \mapsto h]} \{\mathsf{Gain}(h,k)\} \\
    &\xrightarrow{\symbpf_1[x_1 \mapsto h, y \mapsto w]} \{\mathsf{Book}(w), \mathsf{Read}(h,w)\} \\
    &\xrightarrow{\symbpf_3} \checkmark.
\end{align*}
Here, each arrow represents the residual obligations produced in a single round of derivation, with bridging applied as needed.
Finally, premise $\symbpf_3$ discharges the two remaining obligations.
In this example, only the displayed bridge $\symbbridge$ requires LLM evaluation throughout the derivation.

\subsection{Proof of Soundness}
\label{app:recursive-soundness}

\newtheorem{searchlemma}{Lemma}

\begin{searchlemma}[Existential decomposition]
\label{lem:search-existential}
For $\phi=\exists x.\,\psi$, let $\symbBridge$ be the bridge set
returned by a recursive call for an admissible instance
$\psi[\theta]\in\textnormal{\textsc{Inst}}(\psi,\symbSymTab)$.
If ${\symbSolver(\symbPF\cup\symbBridge,\psi[\theta])=\textnormal{\textsc{Proved}}}$,
then ${\symbSolver(\symbPF\cup\symbBridge,\phi)=\textnormal{\textsc{Proved}}}$.
\end{searchlemma}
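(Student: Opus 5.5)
The plan is to reduce the lemma to the soundness of the solver together with the existential-introduction rule of first-order logic. I will assume, as the paper does implicitly, that $\symbSolver$ is sound: $\symbSolver(\Gamma,\chi)=\textsc{Proved}$ only if $\Gamma\models\chi$. I will also assume it is complete for the trivial one-step inferences needed here, or more weakly that its proof calculus contains $\exists$-introduction. Under these assumptions the argument has three steps.

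\textbf{Step 1: unpack the instance.} Since $\psi[\theta]\in\textsc{Inst}(\psi,\symbSymTab)$, the substitution $\theta$ has the form $\{x_i\mapsto c_i\}_{i=1}^n$. Each $c_i$ is a constant occurring in $\symbSymTab$, hence a closed term of the signature of $\symbPF\cup\symbBridge$. The type condition $\operatorname{type}_{\symbSymTab}(c_i)\preceq\operatorname{type}_{\symbSymTab}(x_i)$ only restricts which instances are enumerated. It plays no logical role, because the formalization is untyped FOL and types live only in the grounding table.

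\textbf{Step 2: apply $\exists$-introduction.} I will show semantically that $\psi[\theta]\models\exists x.\,\psi$, where $x$ is the tuple $x_1,\ldots,x_n$. Take any model $M$ of $\psi[\theta]$. Assign each $x_i$ the interpretation $c_i^M$. By the substitution lemma, $\psi$ holds under this assignment, so $M\models\exists x.\,\psi$. Proof-theoretically, the same fact is $n$ successive applications of $\exists$-introduction. By soundness of $\symbSolver$, $\symbPF\cup\symbBridge\models\psi[\theta]$. Transitivity of entailment then gives $\symbPF\cup\symbBridge\models\phi$.

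\textbf{Step 3: return from entailment to the solver verdict.} To conclude $\symbSolver(\symbPF\cup\symbBridge,\phi)=\textsc{Proved}$, I will use refutation completeness of the solver (for example, a resolution or tableau prover on the clausal form of $\symbPF\cup\symbBridge\cup\{\neg\phi\}$). The alternative is to take as an explicit assumption that its calculus is closed under $\exists$-introduction. In that case one appends a single $\exists$-introduction step to the proof of $\psi[\theta]$.

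I expect Step 3 to be the main obstacle. The lemma is stated in terms of the solver's output rather than logical entailment, so it needs the solver to be complete, at least relative to one-step extensions of proofs it has already found. For a complete FOL prover this is immediate. For a resource-bounded solver the statement can fail, so I would make the assumption ``$\symbSolver$ is sound and complete for first-order entailment'' explicit at the start of the appendix. A second subtlety is that Step 2 needs each $c_i$ to be free for $x_i$ in $\psi$, that is, substitution must not capture variables. \textsc{Normalize} should guarantee this by renaming bound variables apart, and I would note this.
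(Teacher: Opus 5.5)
Your proposal is correct and rests on the same key step as the paper, whose proof is a one-liner: apply existential introduction to $\psi[\theta]$, using the constants of $\theta$ as witnesses and leaving the bridge set $\symbBridge$ unchanged. Your Step~3 assumption, that the solver's verdicts are closed under this step (or that $\symbSolver$ is complete), is exactly what the paper leaves tacit, so stating it explicitly is a real gain in rigor; your closing capture-avoidance caveat is vacuous, because the substituted terms are constants and are always free for the $x_i$.
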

\begin{proof}
Apply existential introduction with the witness $\theta$'s bridge set.
\end{proof}

\begin{searchlemma}[Universal decomposition]
\label{lem:search-universal}
For $\phi=\forall x.\,\psi$, introduce a fresh eigenvariable
${c_{\mathrm{fresh}}}$ and let $\symbBridge$ be the bridge set returned for
$\psi[x\mapsto {c_{\mathrm{fresh}}}]$.  Each bridge involving ${c_{\mathrm{fresh}}}$ is
universally quantified over the corresponding variable and accepted by Gate
with that universal scope.  If
${\symbSolver(\symbPF\cup\symbBridge,\psi[x\mapsto {c_{\mathrm{fresh}}}])=\textnormal{\textsc{Proved}}}$,
then ${\symbSolver(\symbPF\cup\symbBridge,\phi)=\textnormal{\textsc{Proved}}}$.
\end{searchlemma}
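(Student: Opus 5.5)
The argument is the standard eigenvariable (universal-introduction) step, applied after checking that the bridges returned by the recursive call do not break the freshness side condition. Write $\Gamma=\symbPF\cup\symbBridge$ and assume the solver is sound and complete for first-order entailment. Then $\symbSolver(\Gamma,\chi)=\textsc{Proved}$ can be read as $\Gamma\vdash\chi$. The goal is to derive $\Gamma\vdash\forall x.\,\psi$ from $\Gamma\vdash\psi[x\mapsto c_{\mathrm{fresh}}]$ by the $\forall$-introduction rule, as in \citep{lincoln1994proof}. That rule needs $c_{\mathrm{fresh}}$ not to occur in $\Gamma$ or in $\forall x.\,\psi$.

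\textbf{Step 1: freshness in the premises and the goal.} Algorithm~\ref{alg:direct-search} chooses $c_{\mathrm{fresh}}$ fresh when the universal case fires. It therefore does not occur in $\symbPF$, in $\psi$, or in any ancestor obligation.

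\textbf{Step 2: removing $c_{\mathrm{fresh}}$ from the bridges.} This is the main obstacle, because bridges from the recursive call may mention $c_{\mathrm{fresh}}$. I would use the lemma's hypothesis on bridge scope. Each bridge $\symbbridge=\Sigma\rightarrow\chi$ that involves $c_{\mathrm{fresh}}$ is stored in $\symbBridge$ as its generalization $\forall x.\,(\Sigma\rightarrow\chi)[c_{\mathrm{fresh}}\mapsto x]$, and Gate accepted it with that universal scope. After this replacement, no formula in $\Gamma$ contains $c_{\mathrm{fresh}}$.

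The proof of $\psi[x\mapsto c_{\mathrm{fresh}}]$ originally used the instantiated bridges. Each of these follows from its generalization by $\forall$-elimination with the term $c_{\mathrm{fresh}}$. So $\Gamma\vdash\psi[x\mapsto c_{\mathrm{fresh}}]$ still holds with the generalized bridges in place. I would argue this by induction on the recursion depth. Inner calls return bridges that mention other fresh constants or residual obligations, and these are generalized at their own universal nodes in the same way. This leaves the only remaining occurrences of $c_{\mathrm{fresh}}$ as those generalized at the present node.

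\textbf{Step 3: apply $\forall$-introduction.} With the side condition now satisfied, $\forall$-introduction gives $\Gamma\vdash\forall x.\,\psi$, which is $\phi$. Solver completeness then yields $\symbSolver(\symbPF\cup\symbBridge,\phi)=\textsc{Proved}$.

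Two smaller points remain. First, the grounding-table entry added for $c_{\mathrm{fresh}}$ carries the type and NL of $x$. Generalizing over $x$ therefore matches the scope Gate evaluated, so the accepted universal bridge is exactly the one used. Second, a Normalize step may have been applied to $\phi$ before the case split. Normalization preserves logical equivalence, so proving the normalized form suffices.
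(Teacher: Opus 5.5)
Your proposal is correct and follows the same route as the paper: freshness of $c_{\mathrm{fresh}}$, together with the universal closure of every bridge that mentions it, discharges the eigenvariable side condition, so $\forall$-introduction lifts the proof of $\psi[x\mapsto c_{\mathrm{fresh}}]$ to $\forall x.\,\psi$ under the same bridge set. Your explicit solver-completeness assumption and the $\forall$-elimination step that recovers the instantiated bridges spell out what the paper's three-line proof leaves implicit.
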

\begin{proof}
The relevant bridges are asserted for every admissible assignment to the
variable represented by ${c_{\mathrm{fresh}}}$.  Freshness and universal closure ensure
that the assumptions do not depend on a particular value of ${c_{\mathrm{fresh}}}$.
Universal introduction therefore extends the subproof of
$\psi[x\mapsto {c_{\mathrm{fresh}}}]$ to a proof of $\forall x.\,\psi$, using
the same bridge set.
\end{proof}

\begin{searchlemma}[Conjunctive decomposition]
\label{lem:search-conjunction}
For $\phi=\phi_1\land\phi_2$, let $\symbBridge_1$ and
$\symbBridge_2$ be the bridge sets returned by the two recursive calls.
If ${\symbSolver(\symbPF\cup\symbBridge_i,\phi_i)=\textnormal{\textsc{Proved}}}$
for $i=1,2$, then
their merged set $\symbBridge=\symbBridge_1\cup\symbBridge_2$ satisfies
${\symbSolver(\symbPF\cup\symbBridge,\phi)=\textnormal{\textsc{Proved}}}$.
\end{searchlemma}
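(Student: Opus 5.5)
The plan is to read $\symbSolver(\Gamma,\chi)=\textsc{Proved}$ as the assertion that the symbolic solver has a derivation of $\chi$ from the assumption set $\Gamma$. The solver is sound and (for the purposes of this lemma) complete with respect to a standard first-order natural-deduction or sequent calculus, just as in Lemmas~\ref{lem:search-existential} and~\ref{lem:search-universal}. Under this reading, the lemma reduces to two structural facts about derivability: \emph{monotonicity} (weakening) in the assumption set, and \emph{conjunction introduction}.

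\textbf{Step 1: weakening.} From $\symbSolver(\symbPF\cup\symbBridge_1,\phi_1)=\textsc{Proved}$ we obtain a derivation of $\phi_1$ whose open assumptions all lie in $\symbPF\cup\symbBridge_1$. Since $\symbPF\cup\symbBridge_1\subseteq\symbPF\cup\symbBridge$, the same derivation witnesses $\symbSolver(\symbPF\cup\symbBridge,\phi_1)=\textsc{Proved}$. The identical argument gives the result for $\phi_2$ with $\symbBridge_2$.

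\textbf{Step 2: $\land$-introduction.} Combine the two derivations, both now over the common assumption set $\symbPF\cup\symbBridge$, with one application of $\land$-introduction. This yields a derivation of $\phi_1\land\phi_2=\phi$, so $\symbSolver(\symbPF\cup\symbBridge,\phi)=\textsc{Proved}$.

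\textbf{Main obstacle.} The only delicate point is ensuring that merging the bridge sets is harmless, which is where the eigenvariable bookkeeping of Lemma~\ref{lem:search-universal} matters. Suppose a bridge in $\symbBridge_1$ mentions a fresh constant $c_{\mathrm{fresh}}$ introduced inside the first subproof. Adding it as an assumption in the second subproof could then interfere with a later universal introduction over that same constant. I would handle this in two ways. First, the recursion for $\phi_1$ and $\phi_2$ draws fresh constants independently; we may rename them apart, so that the eigenvariables of the two branches are disjoint. Second, by the convention of Lemma~\ref{lem:search-universal}, every bridge involving a fresh constant is stored in universally closed form. Consequently, the elements of $\symbBridge$ are closed formulas, and no eigenvariable condition in either subderivation is violated by the extra assumptions.

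A contradictory bridge set could make $\symbPF\cup\symbBridge$ inconsistent. This does not affect the lemma, since it only concerns derivability and not the decision rule $\operatorname{Decide}$. Gate-acceptance of $\symbBridge$ is likewise immediate, because every element of $\symbBridge$ already lies in $\symbBridge_1$ or in $\symbBridge_2$ and was accepted there.
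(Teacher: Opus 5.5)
Your proposal is correct and follows the paper's proof, which simply combines the two subproofs by conjunction introduction. You additionally make explicit the weakening of each subproof from $\symbPF\cup\symbBridge_i$ to $\symbPF\cup\symbBridge$, which the paper leaves implicit; your eigenvariable discussion is extra hygiene rather than a real obstacle, since monotonicity of first-order derivability holds once eigenvariables are renamed apart.
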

\begin{proof}
Combine the two subproofs by conjunction introduction. 
\end{proof}

\begin{searchlemma}[Disjunctive decomposition]
\label{lem:search-disjunction}
For $\phi=\phi_1\lor\phi_2$, select any successful branch
$i\in\{1,2\}$ with returned bridge set $\symbBridge_i$.
If ${\symbSolver(\symbPF\cup\symbBridge_i,\phi_i)=\textnormal{\textsc{Proved}}}$,
then the selected set $\symbBridge=\symbBridge_i$ satisfies
${\symbSolver(\symbPF\cup\symbBridge,\phi)=\textnormal{\textsc{Proved}}}$.
\end{searchlemma}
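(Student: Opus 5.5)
The plan is to mirror the preceding decomposition lemmas and rely on disjunction introduction, treating the solver $\symbSolver$ as sound and complete for first-order entailment under a fixed set of assumptions. That is, I read $\symbSolver(\Gamma,\chi)=\textsc{Proved}$ as meaning that a derivation of $\chi$ from $\Gamma$ exists, and I use only this reading. The first step fixes the successful branch $i\in\{1,2\}$ selected by Algorithm~\ref{alg:direct-search} on the disjunctive case. I take its returned bridge set $\symbBridge_i$ together with the hypothesis $\symbSolver(\symbPF\cup\symbBridge_i,\phi_i)=\textsc{Proved}$. This yields a derivation $\mathcal{D}_i$ of $\phi_i$ whose open assumptions all lie in $\symbPF\cup\symbBridge_i$.

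The second step applies $\lor$-introduction (left or right, according to $i$) to $\mathcal{D}_i$. This produces a derivation of $\phi_1\lor\phi_2=\phi$, and the rule discharges nothing and adds no assumption, so the open assumptions remain a subset of $\symbPF\cup\symbBridge_i$. Setting $\symbBridge=\symbBridge_i$ then gives $\symbSolver(\symbPF\cup\symbBridge,\phi)=\textsc{Proved}$.

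One point needs care. The algorithm normalizes $\phi$ before case analysis, so the disjunction actually split may be $\textsc{Normalize}(\phi)$ and not the original obligation. I would note that normalization is a logical equivalence, so a proof of the normalized form converts to a proof of $\phi$. Gate acceptance of the bridges in $\symbBridge_i$ is inherited unchanged, because no new bridge is introduced at this node.

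No real obstacle is expected; the argument is a single natural-deduction step. The only genuine assumption is that $\symbSolver$ is closed under $\lor$-introduction and under equivalence-preserving normalization. If the solver were incomplete in that respect, I would fall back on the first-order entailment $\symbPF\cup\symbBridge_i\models\phi$ and read the \textsc{Proved} claim through completeness of the underlying prover.
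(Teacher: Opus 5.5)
Your proposal is correct and matches the paper's proof: one application of disjunction introduction to the subproof of $\phi_i$, keeping only the bridge set $\symbBridge_i$ of the selected branch. The paper handles your remarks on normalization and inherited Gate acceptance in the inductive step of the soundness theorem rather than in this lemma, and they are consistent with it.
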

\begin{proof}
Apply disjunction introduction to the selected subproof, retaining only its
bridge set.
\end{proof}

\begin{searchlemma}[Bridge identification]
\label{lem:search-bridge}
If Algorithm~\ref{alg:grounding-retrieval} yields
$(\symbBridge,\symbobligation)$ for $\phi$, 
${\symbSolver(\symbPF\cup\symbBridge_{\symbobligation},\symbobligation)
=\textnormal{\textsc{Proved}}}$ with a Gate-accepted bridge set
$\symbBridge_{\symbobligation}$, then
\begin{equation*}
\symbSolver(\symbPF\cup\symbBridge\cup\symbBridge_{\symbobligation},\phi)
=\textnormal{\textsc{Proved}}
\quad\land\quad
\operatorname{Gate}(\symbBridge\cup\symbBridge_{\symbobligation},\symbSymTab)
=\textnormal{\textsc{Accepted}}.
\end{equation*}
\end{searchlemma}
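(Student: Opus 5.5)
The proof splits into two conjuncts, handled separately: derivability and Gate acceptance.

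\textbf{Derivability.} Unfold what Algorithm~\ref{alg:grounding-retrieval} guarantees when it yields $(\symbBridge,\symbobligation)$. There is a subset $\symbInst$ of type-compatible instantiations of premises in $\symbPF$, a support $\Sigma$ built from at most two literals of these instantiations, and the residual obligation $\symbobligation=\Call{Abduce}{\symbInst,\Sigma}$. By the defining property of abduction,
\[
\Bigl(\bigwedge\nolimits_{\symbinst\in\symbInst}\symbinst\Bigr)\land\symbobligation\models\Sigma .
\]
Each $\symbinst\in\symbInst$ is a ground or partial instance of some $\symbpf\in\symbPF$. Universal elimination therefore gives $\symbPF\models\symbinst$. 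Type compatibility only restricts which instances are formed, so it does not affect validity.

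From the hypothesis $\symbSolver(\symbPF\cup\symbBridge_{\symbobligation},\symbobligation)=\textsc{Proved}$, I use soundness of the solver to get $\symbPF\cup\symbBridge_{\symbobligation}\models\symbobligation$. Combining these two facts gives $\symbPF\cup\symbBridge_{\symbobligation}\models\Sigma$.

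Next I split on how $\symbBridge$ was set in Algorithm~\ref{alg:grounding-retrieval}. If $\Sigma=\phi$, then $\symbBridge=\varnothing$ and $\phi$ follows directly. Otherwise $\symbBridge=\{\Sigma\rightarrow\phi\}$, and modus ponens yields $\symbPF\cup\symbBridge\cup\symbBridge_{\symbobligation}\models\phi$. Monotonicity of entailment lets me add the extra bridges freely.

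\textbf{Main obstacle.} The difficulty is passing from semantic entailment back to $\symbSolver(\cdot)=\textsc{Proved}$. For this I assume, as the paper implicitly does throughout Lemmas~\ref{lem:search-existential}--\ref{lem:search-disjunction}, that $\symbSolver$ is sound and complete for the proofs it is asked to reconstruct. The cleaner route is constructive. I would take the solver's proof of $\symbobligation$ and append the instances $\symbinst$, the abductive step to $\Sigma$, and one modus ponens with the bridge. Each appended step is a standard rule (universal elimination, conjunction/implication reasoning, modus ponens), so the result is a checkable derivation of $\phi$ from $\symbPF\cup\symbBridge\cup\symbBridge_{\symbobligation}$. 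This is the same style of argument used in the decomposition lemmas.

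A subtle point arises if $\Sigma$ contains the eigenvariable of an enclosing universal obligation. In that case the bridge is read with universal scope, as in Lemma~\ref{lem:search-universal}. I would remark that this is handled at the universal step, not here.

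\textbf{Gate acceptance.} I treat $\operatorname{Gate}$ as acting bridge-wise, so that a set is accepted iff each member is. Algorithm~\ref{alg:grounding-retrieval} yields only after $\operatorname{Gate}(\symbBridge,\symbSymTab)=\textsc{Accepted}$, or with $\symbBridge=\varnothing$, which is trivially accepted. $\symbBridge_{\symbobligation}$ is accepted by hypothesis. Hence the union $\symbBridge\cup\symbBridge_{\symbobligation}$ is accepted. I would state this per-bridge interpretation of $\operatorname{Gate}$ on sets explicitly as the convention under which the lemma holds.
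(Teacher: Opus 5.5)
Your proposal is correct and follows the paper's own argument. The paper likewise shows that the instances in $\symbInst$ follow from $\symbPF$, that the abduction condition together with the residual subproof establishes $\Sigma$, and that Gate acceptance comes from the yield condition; you additionally make explicit the final modus ponens through $\Sigma\rightarrow\phi$ (including the case $\Sigma=\phi$), the solver-soundness assumption, and the per-bridge reading of $\operatorname{Gate}$ on unions, all of which the paper leaves implicit.
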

\begin{proof}
The yield statement requires Gate acceptance.  The instantiated premise set
$\symbInst$ follows from $\symbPF$.  By the abduction condition in
Section~\ref{sec:proof-completion}, the instantiated premises and the residual subproof
jointly establish the selected support $\Sigma$. 
\end{proof}

\begin{theorem}[Soundness of successful search]
\label{thm:recursive-search-soundness}
If Algorithm~\ref{alg:direct-search} returns a bridge set $\symbBridge$
for a goal $\symbgoal$, then $\symbBridge$ satisfies both criteria in  Equation~\eqref{eq:accepted-completion}:
\begin{equation*}
\symbSolver(\symbPF\cup\symbBridge, \symbgoal) = \textsc{Proved}
\quad\land\quad
\operatorname{Gate}(\symbBridge, \symbSymTab) = \textsc{Accepted}.
\end{equation*}
\end{theorem}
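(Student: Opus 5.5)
We prove a slightly stronger invariant by induction on the recursion tree of a successful call: \emph{if} $\Call{Prove}{\phi,\symbAncestor,\symbSymTab}$ returns a bridge set $\symbBridge\neq\textsc{Failed}$, then $\symbSolver(\symbPF\cup\symbBridge,\phi)=\textsc{Proved}$ and $\operatorname{Gate}(\symbBridge,\symbSymTab)=\textsc{Accepted}$. Here we read Gate acceptance of a set as acceptance of each of its members, with the empty set accepted vacuously. Theorem~\ref{thm:recursive-search-soundness} is then the instance $\phi=\symbgoal$, $\symbAncestor=\varnothing$. The induction is well founded because a successful return is produced by a finite tree of completed recursive calls. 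Search limits, the ancestor check, and failed branches only cause \textsc{Failed} returns, so they never appear inside a successful derivation.

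\textbf{Base case.} The base case is the first branch of Algorithm~\ref{alg:direct-search}. There $\symbSolver(\symbPF,\phi)=\textsc{Proved}$ and the returned set is $\varnothing$, so both conjuncts hold, the Gate one vacuously. One small point must be checked: normalization does not change provability. $\Call{Normalize}{}$ rewrites $\phi$ into a logically equivalent canonical form, so a proof of the normalized obligation is a proof of $\phi$. We state this as an assumption on the normalizer.

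\textbf{Inductive cases.} These follow the structural branches, and each is dispatched by the corresponding lemma together with the induction hypothesis on the returned subsets.
\begin{itemize}
\item Existential obligation: use Lemma~\ref{lem:search-existential}.
\item Universal obligation: use Lemma~\ref{lem:search-universal}. Its hypothesis that bridges mentioning $c_{\mathrm{fresh}}$ are universally closed and Gate-accepted with that scope is exactly what the eigenvariable condition needs.
\item Conjunction: use Lemma~\ref{lem:search-conjunction}. Monotonicity of the solver in its premise set lets the two subproofs run over $\symbPF\cup\symbBridge_1\cup\symbBridge_2$, and Gate acceptance of the union follows memberwise.
\item Disjunction: use Lemma~\ref{lem:search-disjunction}.
\end{itemize}
In each case Gate acceptance of the returned set is inherited directly from the induction hypothesis, since the returned set is either a subcall's set or a union of subcalls' sets.

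\textbf{Bridge-identification branch.} The remaining case is when Algorithm~\ref{alg:direct-search} returns $\symbBridge\cup\symbBridge_{\symbobligation}$ after $\Call{Identify}{}$ yields $(\symbBridge,\symbobligation)$ and the recursive call on $\symbobligation$ succeeds. By the induction hypothesis, $\symbSolver(\symbPF\cup\symbBridge_{\symbobligation},\symbobligation)=\textsc{Proved}$ and $\symbBridge_{\symbobligation}$ is Gate-accepted, so Lemma~\ref{lem:search-bridge} gives the conclusion directly.

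\textbf{Main obstacle.} The delicate step is making Lemma~\ref{lem:search-bridge} airtight. There are three pieces. First, each $\symbinst\in\symbInst$ is entailed by $\symbPF$ by universal instantiation of a premise. Second, the abduction condition gives $\bigwedge\symbinst\land\symbobligation\models\Sigma$. Third, modus ponens with the bridge $\Sigma\rightarrow\phi$, or the identity when $\Sigma=\phi$ and $\symbBridge=\varnothing$, gives $\phi$. Chaining these needs monotonicity and soundness of $\symbSolver$ with respect to entailment, meaning that proving from a superset of premises and composing proofs are admissible. I would make these explicit as standing assumptions about the solver and check that every lemma uses them consistently, so that the final statement is relative to a sound, monotone solver.
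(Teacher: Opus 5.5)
Your proposal is correct and follows the paper's proof essentially step for step: the same induction over successful calls of \textsc{Prove}, the vacuous base case, the four structural cases discharged by Lemmas~\ref{lem:search-existential}--\ref{lem:search-disjunction}, and the bridge-identification case discharged by Lemma~\ref{lem:search-bridge}, with the invariant specialized at the end to $\phi=\symbgoal$. Two minor slips in your side remarks: the normalization fact is needed in the inductive cases rather than the base case (the direct solver check runs before \textsc{Normalize}), and what turns the semantic abduction condition into $\symbSolver(\cdot)=\textsc{Proved}$ is a completeness-type closure property of the solver (closure of its provability under instantiation, the abduction entailment, and modus ponens), not soundness, which only runs from \textsc{Proved} to entailment.
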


\begin{proof}
We prove by induction on the successful recursive derivation that each successful call
$\Call{Prove}{\phi,\symbAncestor,\symbSymTab}$ returns a Gate-accepted
bridge set under which the solver can discharge $\phi$.

\emph{Base case.}
The solver directly proves $\phi$ from $\symbPF$.  The returned
bridge set is empty, so Gate acceptance is vacuous.

\emph{Inductive step.}
Assume the claim for the successful recursive subcalls.  Normalization
preserves equivalence, so it suffices to discharge the normalized obligation.
Its four logical decomposition cases are as follows.
\begin{enumerate}[label=(\roman*),leftmargin=2.5em,itemsep=0.35em,parsep=0pt,topsep=0.4em]
\item $\phi=\exists x.\,\psi$.
A recursive call succeeds for an admissible instance $\psi[\theta]$.
The induction hypothesis establishes that instance with accepted bridges,
and Lemma~\ref{lem:search-existential} discharges $\phi$.

\item $\phi=\forall x.\,\psi$.
The recursive call succeeds for $\psi[x\mapsto {c_{\mathrm{fresh}}}]$, with
${c_{\mathrm{fresh}}}$ recorded as arbitrary in the extended grounding table.
The induction hypothesis establishes the body with accepted bridges;
those involving ${c_{\mathrm{fresh}}}$ retain their universal scope.
Lemma~\ref{lem:search-universal} discharges $\phi$.

\item $\phi=\phi_1\land\phi_2$.
Both recursive calls must succeed.  By the induction hypothesis, their
returned sets $\symbBridge_1$ and $\symbBridge_2$ contain accepted
bridges and establish $\phi_1$ and $\phi_2$, respectively.
Lemma~\ref{lem:search-conjunction} discharges $\phi$ under the merged set.

\item $\phi=\phi_1\lor\phi_2$.
At least one recursive call succeeds.  The induction hypothesis establishes
the disjunct of any selected successful branch with its accepted bridge set
$\symbBridge_i$.  Lemma~\ref{lem:search-disjunction} discharges $\phi$
with $\symbBridge=\symbBridge_i$.

\end{enumerate}

If no logical decomposition applies, Algorithm~\ref{alg:grounding-retrieval}
yields $(\symbBridge,\symbobligation)$, and Algorithm~\ref{alg:direct-search}
recursively proves $\symbobligation$.  The induction hypothesis supplies
its proof and accepted bridge set $\symbBridge_{\symbobligation}$.
Lemma~\ref{lem:search-bridge} then discharges $\phi$ under
$\symbBridge\cup\symbBridge_{\symbobligation}$, whose bridges are
accepted by that lemma and the induction hypothesis.
Thus every successful return supplies a symbolic proof and accepted bridges.
Taking $\phi=\symbgoal$ gives Equation~\eqref{eq:accepted-completion}.
\end{proof}


\end{document}